\let\@font@warningori\@font@warning
\newcommand\shutup{\def\@font@warning##1{}}
\newcommand\youcanspeak{\let\@font@warning\@font@warningori}
\theoremstyle{theorem}
\newtheorem{theorem}{Theorem}
\newtheorem*{theorem*}{Theorem}
\newtheorem{lemma}[theorem]{Lemma}
\newtheorem*{lemma*}{Lemma}
\newtheorem{claim}[theorem]{Claim}
\newtheorem*{claim*}{Claim}
\newtheorem*{conjecture*}{Conjecture}
\newtheorem*{problem*}{Problem}
\newtheorem*{definition*}{Definition}
\newtheorem{observation}{Observation}
\theoremstyle{remark}
\newtheorem*{remark*}{Remark}
\newtheorem*{algorithm*}{Algorithm}
\providecommand{\setN}{\mathbb{N}}
\providecommand{\setZ}{\mathbb{Z}}
\providecommand{\setR}{\mathbb{R}}
        \def\drawRect#1#2#3#4#5{
           \FPeval{\x2}{(#2) + (#4)} 
           \FPeval{\y2}{(#3) + (#5)} 
           \pspolygon[#1](#2,#3)(\x2,#3)(\x2,\y2)(#2,\y2)
        }
\DeclareMathAlphabet{\pazocal}{OMS}{zplm}{m}{n}
\title{A Logarithmic Additive Integrality Gap for Bin Packing}
\author{Rebecca Hoberg\thanks{Email: {\tt rahoberg@math.washington.edu}} \quad and \quad Thomas Rothvo{ss}\thanks{Email: {\tt rothvoss@uw.edu}. Both authors supported by NSF grant 1420180 with title ``\emph{Limitations of convex relaxations in combinatorial optimization}''. File compiled on {\today, \currenttime}.} 
\vspace{2mm} \\ University of Washington, Seattle} 
\begin{document}

\maketitle

\begin{abstract}
\noindent 
For \emph{bin packing}, the input consists of $n$ items 
with sizes $s_1,\ldots,s_n \in [0,1]$ which have to be assigned to a minimum
number of bins of size 1. Recently, the second author gave an LP-based polynomial time 
algorithm that employed techniques from \emph{discrepancy theory} to find a solution using at most $OPT + O(\log OPT \cdot \log \log OPT)$ bins.

In this paper, we present an approximation algorithm that has an additive gap of
only $O(\log OPT)$ bins, which matches certain combinatorial lower bounds. Any
further improvement would have to use more algebraic structure. 
Our improvement is based on a combination of discrepancy theory techniques and a novel 
\emph{2-stage packing}: first we pack items into \emph{containers}; 
then we pack containers into bins of size 1. Apart from being more effective, we
believe our algorithm is much cleaner than the one of Rothvoss.
\end{abstract}

\section{Introduction}

One of the classical combinatorial optimization problems that is studied in 
computer science is \emph{Bin Packing}. 
It appeared as one of the prototypical $\mathbf{NP}$-hard problems already in 
the book of Garey and Johnson~\cite{GareyJohnson79} but it was studied long before
in operations research in the 1950's, for example by~\cite{TrimProblem-Eiseman1957}.
We refer to the survey of Johnson~\cite{BinPackingSurvey84} for a complete historic account. Bin packing is a good example to study the development of techniques in approximation
algorithms as well. The 1970's brought simple greedy heuristics such as \emph{First Fit}, analyzed by
Johnson~\cite{Johnson73} which requires  at most $1.7\cdot OPT + 1$ bins and \emph{First Fit Decreasing}~\cite{JohnsonFFD74}, which yields a solution with 
at most $\frac{11}{9} OPT + 4$
bins (see~\cite{FFDtightBound-Dosa07} for a tight bound of $\frac{11}{9} OPT + \frac{6}{9}$). 
Later, an \emph{asymptotic PTAS} was developed by Fernandez de la Vega and Luecker~\cite{deLaVegaLueker81}. One of their main technical contributions was an \emph{item grouping technique} to reduce the number of different item types. 
The algorithm of De la Vega and Luecker finds solutions using at most $(1+\varepsilon)OPT + O(\frac{1}{\varepsilon^2})$ bins, while the running time is either of the form $O(n^{f(\varepsilon)})$ if one uses
dynamic programming or of the form $O(n \cdot f(\varepsilon))$ if one applies 
linear programming techniques. 

A big leap forward in approximating bin packing was done by Karmarkar and Karp
in 1982~\cite{KarmarkarKarp82}. First of all, they argue how a certain exponential size
LP can be approximately solved in polynomial time; secondly they provide a
sophisticated rounding scheme which produces a solution with at most
$OPT + O(\log^2 OPT)$ bins, corresponding to an
\emph{asymptotic FPTAS}.

It will be convenient throughout this paper to allow a more compact form of input, where
$s \in [0,1]^n$ denotes the vector of different item sizes and $b \in \setN^n$ denotes the
multiplicity vector, meaning that we have $b_i$ copies of item type $i$. In this notation
we say that $\sum_{i=1}^n b_i$ is the \emph{total number of items}. 
The linear program that we mentioned earlier 
is called the \emph{Gilmore-Gomory LP relaxation}~\cite{TrimProblem-Eiseman1957,Gilmore-Gomory61} and it is of the form 
\begin{equation} \label{eq:GilmoreGomory}
 \min\left\{ {\bf{1}}^Tx \mid Ax \geq b, x \geq \bm{0} \right\}.
\end{equation}
Here, the constraint matrix $A$ consists of all column vectors $p \in \setZ_{\geq 0}^n$ 
that satisfy $\sum_{i=1}^n p_i s_i \leq 1$. The linear program has 
variables $x_p$ that give the number of bins that should be packed according to the
\emph{pattern} $p$.

We denote the value of the optimal fractional solution to \eqref{eq:GilmoreGomory}
by $OPT_f$, and the value of the best integral solution by $OPT$.
As we mentioned before, the linear program~\eqref{eq:GilmoreGomory} does have an 
exponential number of variables, but only $n$ constraints. A fractional solution $x$ 
of cost $\bm{1}^Tx \leq OPT_f + \delta$ can be computed in time polynomial in $\sum_{i=1}^n b_i$ and
$1/\delta$~\cite{KarmarkarKarp82} using the Gr{\"o}tschel-Lovasz-Schrijver
variant of the Ellipsoid method~\cite{GLS-algorithm-Journal81}. 
An alternative and simpler way to solve the LP approximately is via
the Plotkin-Shmoys-Tardos framework~\cite{FractionalPackingAndCovering-PlotkinShmoysTardos-Journal95} or the
multiplicative weight update method. See the survey of \cite{MWU-Survey-Arora-HazanKale2012} for an overview.

The best known lower bound on the integrality gap of the Gilmore-Gomory LP is
an instance where $OPT = \left\lceil  OPT_f \right\rceil + 1$; Scheithauer and Terno~\cite{BinPacking-MIRUP-ScheithauerTerno97}
conjecture that these instances represent the worst case additive gap. 
While this conjecture is still open, it is understandable that 
the best approximation algorithms are based on rounding a solution
to this amazingly strong Gilmore Gomory LP relaxation. 
For example, the Karmarkar-Karp algorithm operates in $\log n$ iterations in which one 
first groups the items such that only $\frac{1}{2}\sum_{i \in [n]} s_i$ many different item sizes 
remain; then one computes a basic solution $x$ and buys $\lfloor x_p\rfloor$ times 
pattern $p$ and continues with the residual instance. 
The analysis provides a $O(\log^2 OPT)$ upper bound on the 
\emph{additive} integrality gap of \eqref{eq:GilmoreGomory}.

The rounding mechanism in the recent paper of the second author~\cite{DBLP:conf/focs/Rothvoss13} uses an algorithm by 
Lovett and Meka that was originally designed for discrepancy minimization. 
The Lovett-Meka algorithm~\cite{DiscrepancyMinimization-LovettMekaFOCS12} can be conveniently summarized as
follows: 
\begin{theorem}[Lovett-Meka '12]
Let $v_1,\ldots,v_m \in \setR^n$ be vectors with $x_{\textrm{start}} \in [0,1]^n$ and parameters
$\lambda_1,\ldots,\lambda_m \geq 0$ so that $\sum_{j=1}^m e^{-\lambda_j^2/16} \leq \frac{n}{16}$. Then in 
randomized polynomial time one can find a vector $x_{\textrm{end}} \in [0,1]^n$ so that $\left|\left<x_{\textrm{end}}-x_{\textrm{start}},v_j\right>\right| \leq \lambda_j \cdot \|v_j\|_2$ for
all $j \in \{ 1,\ldots,m\}$
and at least half of the entries of $x_{\textrm{end}}$ are in $0/1$.
\end{theorem}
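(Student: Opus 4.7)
The plan is to run a Gaussian random walk inside the cube $[0,1]^n$ that starts at $x_{\textrm{start}}$, with increments dynamically constrained to be orthogonal to a growing set of \emph{active} normals. A coordinate $i$ is frozen (and $e_i$ becomes active) the first time the walk hits $X_i(t) \in \{0,1\}$; a discrepancy constraint $v_j$ becomes active the first time $|\langle X(t) - x_{\textrm{start}}, v_j\rangle|$ reaches $\lambda_j \cdot \|v_j\|_2$. Between events the process $X(t)$ is a standard Brownian motion on the orthogonal complement of all currently active normals. Algorithmically this can be simulated by taking small discrete Gaussian steps $\gamma \cdot g_t$ projected onto this subspace, which gives the polynomial-time implementation.

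First I would bound the number of discrepancy constraints that ever become active. For fixed $j$, the process $M_j(t) := \langle X(t)-x_{\textrm{start}}, v_j\rangle$ is a continuous martingale whose quadratic variation grows at rate at most $\|v_j\|_2^2$, hence sub-Gaussian with variance parameter $t \cdot \|v_j\|_2^2$. By a reflection-principle / sub-Gaussian maximal inequality, the probability that $|M_j(t)|$ reaches $\lambda_j \|v_j\|_2$ by the stopping time $T = 1$ is at most $e^{-\lambda_j^2/c}$ for a small constant $c$; the factor $16$ in the hypothesis is chosen to absorb this constant with room to spare. Summing, the expected number of ever-active discrepancy constraints is at most a small fraction of $n$.

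Next I would argue that by time $T=1$ at least $n/2$ coordinates are frozen. Consider the potential $\Phi(t) = \sum_{i=1}^n X_i(t)(1-X_i(t))$. While the walk is running in a free subspace $V(t)$ of dimension $k(t)$, Ito calculus gives $\mathbb{E}[d\Phi(t)] = -\dim(V(t)) \, dt$ plus a boundary correction from frozen coordinates, so the expected time-integrated free dimension up to $T=1$ is at least $\Phi(0) - \Phi(T) \geq -n/4$. Combined with the observation that $k(t) \geq n - (\text{frozen coords}) - (\text{active discrepancy constraints})$, this forces the number of frozen coordinates to reach $n/2$ before time $1$ with constant probability; boosting via $O(\log(1/\delta))$ independent repetitions gives a randomized polynomial-time algorithm.

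The main obstacle, and the reason the hypothesis $\sum_j e^{-\lambda_j^2/16} \leq n/16$ is written this way, is to simultaneously control the two failure modes: the walk "dying" because $k(t)$ hits $0$ before half the coordinates freeze, and a single discrepancy constraint breaching its threshold. Both events must be shown to have probability well below $1$ at the same horizon $T$, which forces a careful calibration between the variance rate of each $M_j$, the threshold $\lambda_j \|v_j\|_2$, the length $T$ of the walk, and the potential decay rate $\dim V(t)$. Once this balance is set up, the union bound over the $m$ discrepancy constraints together with Markov on $\Phi$ closes the argument.
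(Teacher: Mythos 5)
The paper does not prove this statement; it is quoted verbatim from Lovett and Meka~\cite{DiscrepancyMinimization-LovettMekaFOCS12}, with the paper only remarking informally that the algorithm is a random walk in the polytope and deferring to that reference for the proof. Your sketch does in fact reconstruct the Lovett--Meka argument: the projected Gaussian walk that freezes coordinates at $\{0,1\}$ and discrepancy directions at their thresholds, the sub-Gaussian maximal inequality giving $\Pr[v_j \text{ ever fires}] \lesssim e^{-\lambda_j^2/c}$ so that in expectation only a small fraction of the $n$ constraints are ever active, and the It\^{o} potential argument forcing coordinates to freeze --- these are the right three ingredients and match the original proof.

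One direction slip worth fixing: from $\mathbb{E}[d\Phi(t)] = -k(t)\,dt$ the useful consequence is the \emph{upper} bound $\mathbb{E}\bigl[\int_0^T k(t)\,dt\bigr] = \Phi(0) - \mathbb{E}[\Phi(T)] \le \Phi(0) \le n/4$, not the vacuous lower bound $\Phi(0) - \Phi(T) \ge -n/4$ you wrote. The upper bound is what makes the contradiction go through: condition on the event (which holds with constant probability by Markov applied to $\sum_j e^{-\lambda_j^2/16} \le n/16$) that at most, say, $n/16$ discrepancy constraints ever become active; if fewer than $n/2$ coordinates were frozen by time $T$, then $k(t) \ge n - n/2 - n/16 = 7n/16$ throughout $[0,T]$, forcing $\int_0^T k(t)\,dt \ge 7Tn/16$, which for $T=1$ exceeds $n/4$ and hence is unlikely by Markov on the integral. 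With the inequality flipped your skeleton is correct; what remains is the constant bookkeeping (the $16$'s) and the discretization and overshoot control at the cube boundary that Lovett and Meka carry out in detail and that you reasonably elide in a sketch.
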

Intuitively, the points $x_{\textrm{end}}$ satisfying the linear constraints 
$|\left<x_{\textrm{end}} - x_{\textrm{start}}\right>| \leq \lambda_j \cdot \|v_j\|_2$
form a polytope and the distance of the $j$th hyperplane to the start point is exactly $\lambda_j$. 
Then the condition $\sum_{j=1}^m e^{-\lambda_j^2/16} \leq \frac{n}{16}$ essentially says that the 
polytope is going to be ``large enough''. The algorithm of \cite{DiscrepancyMinimization-LovettMekaFOCS12} 
itself consists of a random walk through the polytope. For more details, we refer to the very readable paper 
of \cite{DiscrepancyMinimization-LovettMekaFOCS12}. 
\begin{center}
 \psset{unit=1.9cm}
 \begin{pspicture}(-1.0,-1.2)(1.1,1.2)
\pspolygon[linestyle=none,fillstyle=solid,fillcolor=lightgray](-1,-0.25)(0.25,1)(1,0.75)(1,0.25)(-0.25,-1)(-1,-0.75)
 \pspolygon[linewidth=1pt](-1,-1)(-1,1)(1,1)(1,-1)(-1,-1)
\cnode*(0,0){2.5pt}{origin} \nput{0}{origin}{$x_{\textrm{start}}$}
 \psline[linewidth=1pt](-1,-0.25)(0.25,1)
 \psline[linewidth=1pt](1,0.25)(-0.25,-1)
 \psline[linewidth=1pt](0.25,1)(1,0.75)
 \psline[linewidth=1pt](-0.25,-1)(-1,-0.75)
 \pnode(-0.375,0.375){A} \pnode(-0.75,0.75){B} 
\psline[linewidth=2pt](-1,-0.25)(0.25,1) \ncline[arrowsize=5pt]{->}{A}{B} \nbput[labelsep=2pt]{$v_j$} 
 \pnode(1,0.5){y}
   \psdots[linecolor=black,linewidth=1.5pt](y)
   \nput[labelsep=4pt]{0}{y}{$x_{\textrm{end}}$}
 \psline(-1,-1)(-1,-1.1) \rput[c](-1,-1.2){$0$} 
 \psline( 1,-1)( 1,-1.1) \rput[c](1,-1.2){$1$}
 \psline(-1,-1)(-1.1,-1) \rput[r](-1.1,-1){$0$} 
 \psline(-1, 1)(-1.1, 1) \rput[r](-1.1, 1){$1$}
\ncline[linecolor=black,linewidth=1.5pt,arrowsize=6pt,nodesepA=1pt,nodesepB=1pt]{<->}{origin}{A} \nbput[labelsep=0pt]{$\lambda_j$} 
 \end{pspicture}
\end{center}

The bin packing approximation algorithm of Rothvoss~\cite{DBLP:conf/focs/Rothvoss13} 
consists of logarithmically many runs of Lovett-Meka. 
To be able to use the Lovett-Meka algorithm effectively, Rothvoss needs to rebuild the instance
in each iteration and ``glue'' clusters of small items together to larger items. His procedure is only 
able to do that for items that have size at most 
$\frac{1}{\textrm{polylog}(n)}$ and each of the iterations incurs a loss in the objective function of $O(\log \log n)$. 
In contrast we present a procedure that can even cluster items together that have size 
up to $\Omega(1)$. Moreover, Rothvoss' algorithm only uses two types of parameters
for the error parameters, namely $\lambda_j \in \{ 0,O(\sqrt{\log \log n})\}$. In contrast, we use the
full spectrum of parameters to achieve only a \emph{constant} loss in each of the
logarithmically many iterations.

\subsection{Our contribution}

Our main contribution is the following theorem:
\begin{theorem} \label{thm:MainContribution}
For any Bin Packing instance $(s,b)$ with $s_1,\ldots,s_n \in [0,1]$, one can compute 
a solution with at most $OPT_f + O(\log OPT_f)$ bins, 
 where $OPT_f$ denotes the optimal value of the Gilmore-Gomory LP relaxation. 
The algorithm is randomized and the expected running time is polynomial in $\sum_{i=1}^n b_i$.
\end{theorem}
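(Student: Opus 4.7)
The plan is to establish Theorem~\ref{thm:MainContribution} by reducing it to an \emph{iteration lemma} of the following shape. Given a fractional Gilmore--Gomory solution $x$ whose support $S(x)=\{p : x_p>0\}$ has size $k$, one can efficiently compute a (partial) integral packing together with a residual fractional solution $x'$ with support at most $k/2$ such that the number of bins used by the integral part plus $\mathbf{1}^T x'$ exceeds $\mathbf{1}^T x$ by only a constant. Iterating the lemma $O(\log k)$ times reduces the residual to an instance of constant support that can be packed trivially, at a total additive cost of $O(\log k)$ bins above $\mathbf{1}^T x$. A standard preprocessing step that discards very small items greedily and groups the remaining item sizes in the spirit of~\cite{deLaVegaLueker81,KarmarkarKarp82} ensures that a basic LP solution has support $k=\mathrm{poly}(OPT_f)$, giving the overall bound $OPT_f+O(\log OPT_f)$.

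For the iteration lemma I would apply the Lovett--Meka theorem with $v_1,\ldots,v_m$ being the rows of $A$ restricted to $S(x)$, enforcing the item-count constraints $\sum_p A_{ip}x_p=b_i$, together with the all-ones cost vector, starting from the current fractional point. The key new ingredient is a \emph{2-stage packing}: rather than requiring the rounded vector to respect $Ax\ge b$ exactly, I would interpret each integral pattern as a \emph{container} whose item profile is allowed small deviations from its nominal vector $p$, with items packed into containers and containers packed into bins of size~$1$. The deviation $|\langle x_{\mathrm{end}}-x_{\mathrm{start}},v_i\rangle|\le \lambda_i\|v_i\|_2$ in item~$i$'s count can then be corrected by adjusting the contents of a few containers and, if necessary, opening $O(1)$ additional containers per size scale. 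Crucially, containers are defined at multiple scales: large items each form their own container, so the corresponding $\lambda_i$ must be essentially $0$, while medium and small items are bundled at increasingly coarse granularity, which lets the corresponding $\lambda_i$ grow without blowing up the bin count. This replaces the two-level choice $\lambda_j\in\{0,O(\sqrt{\log\log n})\}$ of~\cite{DBLP:conf/focs/Rothvoss13} by a smoothly varying schedule and trades the $O(\log\log n)$ per-iteration loss for a constant.

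The main obstacle I anticipate is simultaneously arranging the container structure and the $\lambda_j$-schedule so that (i)~the Lovett--Meka side condition $\sum_j e^{-\lambda_j^2/16}\le k/16$ is satisfied, (ii)~the item-count discrepancies at \emph{every} size scale can be absorbed by redistributing items among only $O(1)$ extra containers, and (iii)~those extra containers can themselves be packed using $O(1)$ extra bins, which ought to follow from a First Fit Decreasing style argument provided the containers come from a bounded number of size classes. Counting the active constraints per size scale is straightforward---at scale $s$ there are $O(k/s)$ patterns touching an item of that size---but balancing those counts against the required $\lambda_j$'s is delicate, and it is really (ii) that is the heart of the argument. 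Once the three conditions above fit together, the Lovett--Meka guarantee that at least $k/2$ coordinates become integral yields the halving of the support, and Theorem~\ref{thm:MainContribution} follows by induction on~$k$.
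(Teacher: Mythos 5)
Your skeleton matches the paper: preprocess to eliminate tiny items, then iterate a halving lemma driven by Lovett--Meka, using a 2-stage item-to-container-to-bin packing to absorb the discrepancy errors at per-iteration cost $O(1)$. However, the account of \emph{how} the 2-stage packing tames the Lovett--Meka error contains the right anticipations but not the actual ideas, and as written the argument would not go through.

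First, you propose to feed Lovett--Meka the individual rows of $A$ (one item-count constraint per item size). This fails: the number of such rows can vastly exceed $|\mathrm{supp}(x)|$, forcing either a violated side condition or $\lambda_j$'s too large to be useful. The paper's resolution is to feed \emph{sums over intervals of consecutive rows} (containers sorted by size), organized into a geometric hierarchy of levels $\ell$ with $\lambda_I=\ell$; this works because the eventual error (the deficiency increase) only depends on \emph{prefix sums} $\sum_{j\le i}A_{C_j}(x-\tilde{x})$, so controlling interval sums suffices, and each prefix sum decomposes into $O(1)$ intervals per level. That reduction from ``one constraint per item'' to ``few constraints per size class via intervals'' is the crux of satisfying the side condition, and it is absent from your plan.

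Second, your notion of ``container'' is a pattern with tolerated deviations, but this conflates two different objects. In the paper a container is a glued multiset of items, and the whole point of the grouping and gluing operations is to control $\|v_I\|_2$: grouping forces $\|A_C\|_1 \ge \sigma^{-1/2}$ and gluing forces $\|A_C\|_\infty \le \sigma^{-1/4}$, giving $\|A_C\|_2 \le \|A_C\|_1\cdot\sigma^{1/8}$ so that the Lovett--Meka error $\lambda_I\|v_I\|_2$ times the item size $\sigma$ telescopes to $O(1)$. Without an explicit bound on $\|v\|_2$, there is no reason the deviations you want to absorb are of size $O(1)$ per size scale. Your point (ii), which you correctly flag as the heart of the matter, is precisely what these two operations settle, and the proposal does not supply a mechanism for them. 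Finally, the bookkeeping that lets you pass a fractional residual through $O(\log n)$ rounds while keeping the integral part fixed requires a formal potential (the paper's ``deficiency'' of a pair $(x,y)$ measured via two bipartite packing graphs), and the proposal leaves that structure entirely implicit.
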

The recent book of Williamson and Shmoys~\cite{DesignOfApproxAlgosWilliamsonShmoys} 
 presents a list of 10 open problems in approximation algorithms. 
Problem $\#3$ in the list is whether the Gilmore-Gomory LP has a constant integrality gap; hence we make 
 progress towards that question.  

We want to remark that the original algorithm of Karmarkar and Karp has an additive approximation ratio of
 $O(\log OPT_f \cdot \log( \max_{i,j} \{ \frac{s_i}{s_j} \}))$. For \emph{3-partition} instances where all item sizes
are strictly between $\frac{1}{4}$ and $\frac{1}{2}$, this results in an $O(\log n)$ guarantee, which coincides
with the guarantees of Rothvoss~\cite{DBLP:conf/focs/Rothvoss13} and this paper if applied to those instances. 
A paper of Eisenbrand et al.~\cite{BinPackingViaPermutationsSODA2011} gives a reduction of those instances
to minimizing the discrepancy of 3 permutations. Interestingly, shortly afterwards 
 Newman and Nikolov~\cite{CounterexampleBecksPermutationConjecture-FOCS12} showed that there are
instances of 3 permutations that do require a discrepancy of $\Omega(\log n)$. It seems unclear how to
realize those permutations with concrete sizes in a bin packing instance --- however any further improvement for
bin packing even in that special case with item sizes in $]\frac{1}{4},\frac{1}{2}[$ would need to 
rule out such a realization as well. The second author is willing to conjecture that the integrality gap
for the Gilmore Gomory LP is indeed $\Theta(\log n)$.

\section{A 2-stage packing mechanism} \label{sec:Deficiency}

It is well-known that for the kind of approximation guarantee that we aim to achieve, 
one can assume that the items are not too tiny. 
In fact it suffices to prove
an additive gap of $O(\log \max\{ n,\frac{1}{s_{\min}} \})$ where $n$ is the number 
of different item sizes and $s_{\min}$ is a lower bound on all item sizes. 
Note that in the following,  ``polynomial time'' means 
always polynomial in the total number of items $\sum_{i=1}^n b_i$. 
\begin{lemma} \label{lem:AssumptionSizesAtLeast1-n}
Assume for a monotone function $f$, there is a polynomial time  $OPT_f + f(\max\{n,\frac{1}{s_{\min}}\})$ algorithm 
for Bin Packing instances $(s,b)$ with $s \in [0,1]^n$ and $s_1,\ldots,s_{n} \geq s_{\min} > 0$. 
Then there is a polynomial time algorithm that for \emph{all} instances finds a solution 
with at most 
$OPT_f + f(OPT_f) + O(\log OPT_f)$ bins.
\end{lemma}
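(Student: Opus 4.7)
The plan is to reduce an arbitrary instance to one satisfying the hypothesis by carving off the small items (to be reintroduced by First Fit at the end) and grouping the large items so that the number of distinct sizes drops to at most $OPT_f$.

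First, compute $OPT_f$ from the Gilmore--Gomory LP and set a threshold $\tau := 1/OPT_f$. Split items into \emph{small} ($s_i < \tau$) and \emph{large} ($s_i \geq \tau$); the large items immediately satisfy $1/s_{\min} \leq OPT_f$. Since each small item has size $<\tau$, a First Fit pass that adds the small items to a fixed packing of the large items opens a new bin only once every existing bin exceeds fill $1-\tau$, so the final bin count is at most $\max\{\text{bins-for-large},\,\text{total size}/(1-\tau)+1\} \leq \max\{\text{bins-for-large},\,OPT_f+O(1)\}$.

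The main work is on the large sub-instance, which can have $L \leq OPT_f/\tau = OPT_f^2$ items (and up to that many distinct sizes). I would reduce the number of distinct large sizes to $n' \leq OPT_f$ by a grouping step, paying $O(\log OPT_f)$ bins for the ``removed top groups''. Concretely, partition the large items into the geometric size classes $[2^{-k-1},2^{-k}]$ for $k = 0, 1, \ldots, K:= \log OPT_f$, and inside each class perform a linear shift-and-round-up grouping with group counts $g_k$ chosen so that $\sum_k g_k \leq OPT_f$; the top group of class $k$ then consists of $\lceil N_k/g_k\rceil$ items of size $\leq 2^{-k}$, which fit $2^k$-per-bin. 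By the standard shifting matching (shifted items in a group can be matched injectively to original items one group up, size by size), the resulting modified large sub-instance $I'$ has $n(I') \leq OPT_f$, $1/s_{\min}(I') \leq OPT_f$, and $OPT_f(I') \leq OPT_f$. Invoking the hypothesized algorithm on $I'$ then gives a packing of $I'$---and hence of the original large items, since the rounded sizes dominate the original ones---using at most $OPT_f + f(OPT_f)$ bins; adding the $O(\log OPT_f)$ bins for the removed-top groups and $O(1)$ more for the First Fit pass of small items gives the desired total $OPT_f + f(OPT_f) + O(\log OPT_f)$.

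The delicate step, and the main obstacle I anticipate, is the calibration of the $g_k$'s: a Cauchy--Schwarz argument shows that a uniform choice forces $\Omega((\log OPT_f)^2)$ bins for the removed tops, so hitting the $O(\log OPT_f)$ target while simultaneously achieving $\sum g_k \leq OPT_f$ requires a non-uniform scheme (e.g., a Karmarkar--Karp-style harmonic grouping that makes $g_k$ smaller in scales where the removed items pack cheaply and larger in scales where they do not).
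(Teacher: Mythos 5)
Your decomposition is essentially the paper's: split at a threshold of order $1/OPT_f$ (the paper uses $1/U$ with $U := \sum_i b_i s_i$, so $U \leq OPT_f \leq 2U+1$), group the large items geometrically to cut the number of distinct sizes to $O(OPT_f)$ at an additive cost of $O(\log OPT_f)$ bins, invoke the hypothesized algorithm, and finally First Fit the small items, observing that if this opens a new bin then all but one bin are at least $1 - \tau$ full, which bounds the total by $OPT_f + O(1)$.

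The one place you go astray is the closing worry that a ``uniform choice forces $\Omega((\log OPT_f)^2)$ bins'' and that hitting $O(\log OPT_f)$ requires a delicate non-uniform $g_k$ scheme. That concern is unfounded, and the paper does not need any such calibration. The standard Karmarkar--Karp geometric grouping is exactly ``groups of constant total size within each class,'' i.e.\ $g_k \approx N_k/2^{k+1}$, which is the natural (not fine-tuned) choice. Then each top group has total size $O(1)$, hence $O(1)$ bins, and there are $O(\log OPT_f)$ classes, giving $O(\log OPT_f)$ removed bins; meanwhile $\sum_k g_k \leq \sum_k N_k/2^{k+1} \leq \sum_k(\text{total size of class }k) = U \leq OPT_f$. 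Even your literally ``uniform'' choice $g_k \equiv g := OPT_f/\log OPT_f$ works: the removed bins are at most $\sum_k N_k/(g\,2^k) + O(\log OPT_f)$, and since $N_k/2^k \leq 2 \cdot (\text{total size of class }k)$ one gets $\sum_k N_k/(g\,2^k) \leq 2U/g \leq 2\log OPT_f$, so $O(\log OPT_f)$ overall and $\sum_k g_k \leq OPT_f$. The Cauchy--Schwarz lower bound you are picturing does not apply here; no non-uniform scheme is needed, and the lemma goes through as you outlined once you simply take the standard geometric-grouping parameters.
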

For a proof, we refer to Appendix~A. From now on we assume that we have $n$ different item sizes with all sizes
satisfying $s_i \geq s_{\min}$ for some given parameter $s_{\min}$ (as a side remark, the reduction in Lemma~\ref{lem:AssumptionSizesAtLeast1-n} will  choose $s_{\min} = \Theta(\frac{1}{OPT_f})$). 
Starting from a fractional solution $x$ to \eqref{eq:GilmoreGomory}
our goal is to find an integral solution of cost $\bm{1}^Tx + O(\log \max\{n,\frac{1}{s_{\min}}\})$. Another useful standard argument is
as follows:
\begin{lemma} \label{lem:GreedyPacking}
Any bin packing instance $(s,b)$ can be packed in polynomial time into at most $2\sum_{i=1}^n s_ib_i+1$ bins.
\end{lemma}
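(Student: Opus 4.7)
The plan is to use the classical \emph{First Fit} greedy heuristic: process the items in an arbitrary order (say, by item type, placing each of the $b_i$ copies one at a time), and assign each item to the first open bin into which it still fits; open a new bin only if it fits nowhere. This clearly runs in time polynomial in $\sum_{i=1}^n b_i$.

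Let $k$ denote the number of bins opened by First Fit, and let $L_1,\ldots,L_k$ denote their final loads. The key structural observation is that at most one of these loads can be at most $\frac{1}{2}$. Indeed, suppose for contradiction that two bins $B_i$ and $B_j$ with $i < j$ both have final load $\leq \frac{1}{2}$. Consider the first item $a$ placed into $B_j$. At the moment $a$ was processed, bin $B_i$ was already open, and the load of $B_i$ at that moment was at most its \emph{final} load $\leq \frac{1}{2}$. Since $a$ itself has size at most $\frac{1}{2}$ (otherwise $B_j$'s final load would exceed $\frac{1}{2}$), the item $a$ would have fit into $B_i$, contradicting the First Fit rule that forced $a$ into $B_j$.

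Consequently at least $k-1$ of the bins have load strictly greater than $\frac{1}{2}$, so
\[
\sum_{i=1}^n s_i b_i \;=\; \sum_{\ell=1}^k L_\ell \;>\; \frac{1}{2}(k-1),
\]
which rearranges to $k < 2\sum_{i=1}^n s_i b_i + 1$, i.e.\ $k \leq 2\sum_{i=1}^n s_i b_i + 1$ since $k$ is an integer.

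There is no real obstacle here; the statement is a standard warm-up fact used freely in the bin packing literature, and the only subtlety is the careful handling of the First Fit invariant (that the load of an earlier bin at an earlier time is bounded by its final load, which is monotone). A variant of the argument works equally well with Next Fit (where \emph{every} pair of consecutive bins has combined load $>1$), giving the same bound; I would present First Fit since it is the most common reference point.
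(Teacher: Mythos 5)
Your proof is correct and takes essentially the same route as the paper: run a greedy (First Fit) packing, observe that at most one of the resulting bins can be at most half full, conclude $\sum_i s_i b_i \geq \tfrac{1}{2}(k-1)$, and rearrange. The paper compresses the "at most one nearly-empty bin" observation into a single sentence, whereas you spell out the First Fit invariant that justifies it; the minor remark that $k<2\sum_i s_ib_i+1$ implies $k\le 2\sum_i s_ib_i+1$ "since $k$ is an integer" is superfluous (the implication holds regardless), but nothing is wrong.
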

\begin{proof}
Simply assign the items greedily and open new bins only if necessary. If we end up with $k$ bins, then
at least $k-1$ of them are at least half full, which means that $\sum_{i=1}^n s_ib_i \geq \frac{1}{2}(k-1)$. 
Rearranging gives the claim.
\end{proof}
Now, we come to the main mechanism that allows us the improvement over Rothvoss~\cite{DBLP:conf/focs/Rothvoss13}. Consider an instance $(s,b)$ and a fractional LP solution $x$. We could imagine 
the assignment of items in the input to slots in $x$ as a fractional matching in a bipartite
graph, where we have nodes $i \in [n]$ on the left hand side, each with \emph{demand} $b_i$ and
nodes $(p,i)$ on the right hand side with \emph{supply} $x_p \cdot p_i$. 
Instead, our idea is to employ a \emph{2-stage packing}: first we pack items into \emph{containers}, 
then we pack containers into bins. Here, a container is a multiset of items. 
Before we give the formal definition, we want to explain our construction with a small example
that is visualized in Figure~\ref{fig:AssignmentExample}.
The example has $n=3$ items of size $s = (0.3,0.2,0.1)$ and multiplicity vector $b = (2,1,7)$. 
Those items are assigned into
containers $C_1,C_2,C_3$ which also have multiplicities. In this case we have $y_{C_1} = y_{C_2} = 1$
copies of the first two containers and $y_{C_3} = 2$ copies of the third container. 
Moreover, in our example we have 3 patterns $p_1,p_2,p_3$ each with fractional value $x_{p_1}=x_{p_2} = x_{p_3} = \frac{1}{2}$.
For example, item $2$ is packed into container $C_1$ and that container is assigned with a fractional value of
$\frac{1}{2}$ each to pattern $p_2$ and $p_3$.
\begin{figure}
\begin{center}
\psset{xunit=4cm,yunit=0.8cm}
\begin{pspicture}(0,-0.5)(3,9)
\rput[r](-0.2,8.5){items}
\rput[r](-0.2,4.5){containers}
\rput[r](-0.2,0.5){bins}
\drawRect{fillcolor=black!50!white,fillstyle=solid,linewidth=0.5pt}{0.4}{4}{0.6}{1}
\drawRect{fillcolor=black!40!white,fillstyle=solid,linewidth=0.5pt}{1.6}{4}{0.4}{1}
\drawRect{fillcolor=black!20!white,fillstyle=solid,linewidth=0.5pt}{2.6}{4}{0.3}{1}
\rput[c](0.4,3.7){\psline{|-|}(0,0)(0.6,0) \cnode*(0.3,0){3.0pt}{C1}}
\rput[c](1.6,3.7){\psline{|-|}(0,0)(0.4,0) \cnode*(0.2,0){2.5pt}{C2}}
\rput[c](2.6,3.7){\psline{|-|}(0,0)(0.3,0) \cnode*(0.15,0){2.5pt}{C3}}
\drawRect{fillcolor=black!20!white,fillstyle=solid,linewidth=0.5pt}{0}{0}{0.3}{1}
\drawRect{fillcolor=black!20!white,fillstyle=solid,linewidth=0.5pt}{0.3}{0}{0.3}{1}
\drawRect{fillcolor=black!40!white,fillstyle=solid,linewidth=0.5pt}{0.6}{0}{0.4}{1}
\drawRect{}{0}{0}{1}{1}
\drawRect{fillcolor=black!40!white,fillstyle=solid,linewidth=0.5pt}{1.2}{0}{0.4}{1}
\drawRect{fillcolor=black!50!white,fillstyle=solid,linewidth=0.5pt}{1.6}{0}{0.6}{1}
\drawRect{}{1.2}{0}{1}{1}
\drawRect{fillcolor=black!50!white,fillstyle=solid,linewidth=0.5pt}{2.4}{0}{0.6}{1}
\drawRect{fillcolor=black!20!white,fillstyle=solid,linewidth=0.5pt}{3.0}{0}{0.3}{1}
\drawRect{}{2.4}{0}{1}{1}
\rput[c](0,1.3){\psline{|-|}(0.0,0)(0.3,0) \cnode*(0.15,0){2.5pt}{p11}}
\rput[c](0.3,1.3){\psline{|-|}(0.0,0)(0.3,0) \cnode*(0.15,0){2.5pt}{p12}}
\rput[c](0.6,1.3){\psline{|-|}(0.0,0)(0.4,0) \cnode*(0.2,0){2.5pt}{p13}}
\rput[c](1.2,1.3){\psline{|-|}(0.0,0)(0.4,0) \cnode*(0.2,0){2.5pt}{p21}}
\rput[c](1.6,1.3){\psline{|-|}(0.0,0)(0.6,0) \cnode*(0.3,0){2.5pt}{p22}}
\rput[c](2.4,1.3){\psline{|-|}(0.0,0)(0.6,0) \cnode*(0.3,0){2.5pt}{p31}}
\rput[c](3.0,1.3){\psline{|-|}(0.0,0)(0.3,0) \cnode*(0.15,0){2.5pt}{p32}}
\nccurve[nodesepA=1pt,nodesepB=1pt,angleA=-20,angleB=135]{->}{C1}{p31} \naput[labelsep=0pt,npos=0.3]{$\frac{1}{2}$}
\nccurve[nodesepA=1pt,nodesepB=1pt,angleA=-20,angleB=135]{->}{C1}{p22} \nbput[labelsep=0pt,npos=0.05]{$\frac{1}{2}$}
\ncline[nodesepA=1pt,nodesepB=1pt,angleA=-20,angleB=135]{->}{C2}{p13} \nbput[labelsep=0pt,npos=0.8]{$\frac{1}{2}$}
\ncline[nodesepA=1pt,nodesepB=1pt,angleA=-20,angleB=135]{->}{C2}{p21} \nbput[labelsep=0pt,npos=0.8]{$\frac{1}{2}$}
\nccurve[nodesepA=1pt,nodesepB=1pt,angleA=-135,angleB=45]{->}{C3}{p11}\nbput[labelsep=0pt,npos=0.9]{$\frac{1}{2}$}
\nccurve[nodesepA=1pt,nodesepB=1pt,angleA=-120,angleB=45]{->}{C3}{p12}\nbput[labelsep=0pt,npos=0.95]{$\frac{1}{2}$}
\nccurve[nodesepA=1pt,nodesepB=1pt,angleA=-45,angleB=90]{->}{C3}{p32}\naput[labelsep=0pt,npos=0.5]{$\frac{1}{2}$}
\rput[c](0.5,-0.5){pattern $p_1$: $x_{p_1} = \frac{1}{2}$}
\rput[c](1.7,-0.5){pattern $p_2$: $x_{p_2} = \frac{1}{2}$}
\rput[c](2.9,-0.5){pattern $p_3$: $x_{p_3} =  \frac{1}{2}$}
\drawRect{fillstyle=vlines,linewidth=0.5pt}{0.5}{8}{0.3}{1}
\drawRect{fillstyle=hlines,linewidth=0.5pt}{1.6}{8}{0.2}{1}
\drawRect{fillstyle=crosshatch,linewidth=0.5pt}{2.6}{8}{0.1}{1}
\drawRect{fillstyle=vlines,linewidth=0.5pt}{0.4}{4}{0.3}{1}
\drawRect{fillstyle=hlines,linewidth=0.5pt}{0.7}{4}{0.2}{1}
\drawRect{fillstyle=crosshatch,linewidth=0.5pt}{0.9}{4}{0.1}{1}
\drawRect{fillstyle=vlines,linewidth=0.5pt}{1.6}{4}{0.3}{1}
\drawRect{fillstyle=crosshatch,linewidth=0.5pt}{1.9}{4}{0.1}{1}
\drawRect{fillstyle=hlines,linewidth=0.5pt}{2.6}{4}{0.2}{1}
\drawRect{fillstyle=crosshatch,linewidth=0.5pt}{2.8}{4}{0.1}{1}
\rput[c](0.5,7.7){\psline{|-|}(0.0,0)(0.3,0) \cnode*(0.15,0){2.5pt}{i1}}
\rput[c](1.6,7.7){\psline{|-|}(0.0,0)(0.2,0) \cnode*(0.1,0){2.5pt}{i2}}
\rput[c](2.6,7.7){\psline{|-|}(0.0,0)(0.1,0) \cnode*(0.05,0){2.5pt}{i3}}
\rput[c](0.4,5.3){\psline{|-|}(0.0,0)(0.3,0) \cnode*(0.15,0){2.5pt}{i11}}
\rput[c](0.7,5.3){\psline{|-|}(0.0,0)(0.2,0) \cnode*(0.1,0){2.5pt}{i12}}
\rput[c](0.9,5.3){\psline{|-|}(0.0,0)(0.1,0) \cnode*(0.05,0){2.5pt}{i13}}
\rput[c](1.6,5.3){\psline{|-|}(0.0,0)(0.3,0) \cnode*(0.15,0){2.5pt}{i21}}
\rput[c](1.9,5.3){\psline{|-|}(0.0,0)(0.1,0) \cnode*(0.05,0){2.5pt}{i22}}
\rput[c](2.6,5.3){\psline{|-|}(0.0,0)(0.2,0) \cnode*(0.10,0){2.5pt}{i31}}
\rput[c](2.8,5.3){\psline{|-|}(0.0,0)(0.1,0) \cnode*(0.05,0){2.5pt}{i32}}
\ncline[nodesepA=1pt,nodesepB=1pt,angleA=-20,angleB=135]{->}{i1}{i11} \nbput[labelsep=0pt,npos=0.8]{$1$}
\nccurve[nodesepA=1pt,nodesepB=1pt,angleA=-40,angleB=135]{->}{i1}{i21} \nbput[labelsep=0pt,npos=0.8]{$1$}
\nccurve[nodesepA=1pt,nodesepB=1pt,angleA=-135,angleB=90]{->}{i2}{i12} \nbput[labelsep=0pt,npos=0.2]{$1$}
\nccurve[nodesepA=1pt,nodesepB=1pt,angleA=-135,angleB=45]{->}{i3}{i13} \nbput[labelsep=0pt,npos=0.2]{$1$}
\nccurve[nodesepA=1pt,nodesepB=1pt,angleA=-135,angleB=60]{->}{i3}{i22} \naput[labelsep=0pt,npos=0.6]{$1$}
\nccurve[nodesepA=1pt,nodesepB=1pt,angleA=-90,angleB=90]{->}{i3}{i31} \nbput[labelsep=0pt,npos=0.5]{$2$}
\nccurve[nodesepA=1pt,nodesepB=1pt,angleA=-45,angleB=70]{->}{i3}{i32} \naput[labelsep=0pt,npos=0.5]{$2$}
\drawRect{fillcolor=black!50!white,fillstyle=none,linewidth=1.5pt}{0.4}{4}{0.6}{1}
\drawRect{fillcolor=black!40!white,fillstyle=none,linewidth=1.5pt}{1.6}{4}{0.4}{1}
\drawRect{fillcolor=black!20!white,fillstyle=none,linewidth=1.5pt}{2.6}{4}{0.3}{1}
\rput[c](0.2,4.8){cont. $C_1$} \rput[c](0.2,4.2){$y_{C_1} = 1$}
\rput[c](1.4,4.8){cont. $C_2$}\rput[c](1.4,4.2){$y_{C_2} = 1$}
\rput[c](2.4,4.8){cont. $C_3$}\rput[c](2.4,4.2){$y_{C_3} = 2$}
\rput[c](0.3,8.7){item $i_1$} \rput[c](0.3,8.2){$b_1 = 2$}
\rput[c](1.4,8.7){item $i_2$} \rput[c](1.4,8.2){$b_2 = 1$}
\rput[c](2.4,8.7){item $i_3$} \rput[c](2.4,8.2){$b_3 = 7$}
\end{pspicture}
\caption{Example for assigning items to containers and containers to patterns.\label{fig:AssignmentExample}}
\end{center}
\end{figure}
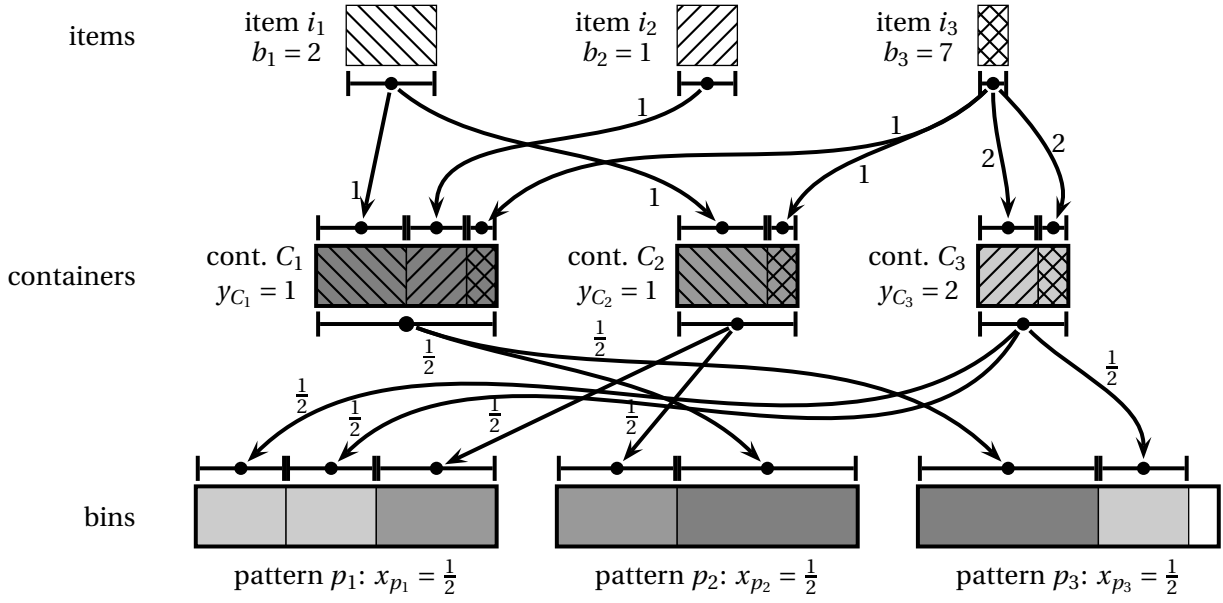
The reader might have noticed that we do allow that some copies of item $i_3$ are assigned to
slots of a larger
item $i_2$. On the other hand, we have $b_3=7$ copies of item 3, but only 6 slots in containers that
we could use. So there will be 1 unit that we won't be able to pack. Similarly, we have $y_{C_3}=2$
copies of container $C_3$, but only $\frac{3}{2}$ slots in the patterns. Later we will say that 
the \emph{deficiency} of the 2-stage packing is $1 \cdot s_3 + \frac{1}{2} \cdot s(C_3)$ where $s(C_3)$ is 
the size of container $C_3$.

Now, we want to give the formal definitions. 
We call any vector $C \in \setZ_{\geq 0}^n$ with $\sum_{i=1}^n s_iC_i \leq 1$ a \emph{container}. Here  $C_i$ denotes
the number of copies of item $i$ that are in the container. The \emph{size} of the container is
denoted by $s(C) := \sum_{i=1}^n C_is_i$. Let $\pazocal{C} := \{ C \in \setZ_{\geq 0}^n \mid s(C) \leq 1\}$ be the set of all containers. 
As we will pack containers into bins, we want to define
a \emph{pattern} as a vector of the form $p \in \setZ_{\geq 0}^{\pazocal{C}}$ where
 $p_C$ denotes the number of times that the pattern contains container $C$. 
Of course the sum of the sizes of the containers should be at most 1, thus 
\[
  \pazocal{P} := \Big\{ p \in \setZ_{\geq 0}^{\pazocal{C}} \mid \sum_{C \in \pazocal{C}} p_C \cdot s(C) \leq 1\Big\}
\]
is set of all \emph{(valid) patterns}.

Now suppose we have an instance $(s,b)$ and a fractional vector  $x \in \mathbb{R}_{\geq 0}^{\pazocal{P}}$.
To keep track of which containers should be used in the intermediate packing step, 
we also need to maintain an integral vector $y \in \mathbb{Z}_{\geq 0}^{\pazocal{C}}$. 

We say that a bipartite graph $G=(V_\ell\cup V_r,E)$ is a \emph{packing graph} if each $v\in V_\ell\cup V_r$ has an associated size $s(v)\in [0,1]$ and multiplicity $\text{mult}(v)\in \setR_{\ge 0}$, and the edge set is given by $E=\{(u,v)\in V_\ell\times V_r \mid s(u)\le s(v)\}$.
An \emph{assignment} in a packing graph is a function $a:E\rightarrow \setR_{\ge 0}$ so that for any $v\in V$, we have $\sum_{e\in \delta(v)}a(e)\le \text{mult}(v),$ where $\delta(v)$ denotes the set of edges incident to $v$.
The \emph{deficiency} of a packing graph is the total size of left nodes that fail to be packed in an optimal assignment. That is, 
$$\text{def}(G):=\min_{a \text{ assignment of } G}  \Big\{ \sum_{v\in V_\ell} s(v)\cdot(\text{mult}(v) - a(\delta(v)))  \Big\}.$$
The edge set of those graphs is extremely simple, so that one can directly obtain 
the deficiency as follows:
\begin{observation} \label{obs:GreedyAssignment}
For any packing graph, an optimal assignment $a : E \to \setR_{\geq 0}$ which attains $\textrm{def}(G)$ can be obtained
as follows: go through the nodes $v \in V_r$ in any order. Take the node $u \in V_{\ell}$
of maximum size that has some capacities left and satisfies $s(u) \leq s(v)$. Increase $a(u,v)$ as much as possible. 
\end{observation}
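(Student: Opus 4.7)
The plan is to prove, by induction on $|V_r|$, that the greedy assignment $a^g$ described in the observation attains the maximum of $\sum_{(u,v)\in E} s(u)\,a(u,v)$ over feasible assignments $a$; since the difference $\sum_{u\in V_\ell}s(u)\text{mult}(u)-\sum_{(u,v)\in E}s(u)\,a(u,v)$ is exactly the quantity defining $\textrm{def}(G)$, maximizing the bilinear objective is the same as attaining the deficiency. The base case $|V_r|=0$ is trivial: no flow is possible, so every assignment (namely $a\equiv 0$) achieves the same value.

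For the inductive step, let $v_1\in V_r$ be the first right node that greedy processes, let $u^*\in V_\ell$ be the size-maximal left node with $s(u^*)\le s(v_1)$ and positive residual multiplicity, and set $\delta:=\min\{\text{mult}(u^*),\text{mult}(v_1)\}$, so greedy places $\delta$ units on $(u^*,v_1)$ and recurses on the residual instance obtained by decreasing $\text{mult}(u^*)$ and $\text{mult}(v_1)$ by $\delta$. It suffices to show that any optimal assignment $a^*$ can be transformed, without changing its objective value, into an assignment $\tilde a^*$ with $\tilde a^*(u^*,v_1)=\delta$; then subtracting $\delta$ from that coordinate of $\tilde a^*$ yields a feasible assignment for the residual instance, and the inductive hypothesis finishes.

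The key step is a four-edge rotation. Suppose $a^*(u^*,v_1)<\delta$. If there exist $v'\ne v_1$ with $a^*(u^*,v')>0$ and some $u'\in V_\ell$ (necessarily with $s(u')\le s(u^*)$, since $u^*$ was the size-maximal eligible choice at $v_1$) with $a^*(u',v_1)>0$, then we increase $a^*(u^*,v_1)$ and $a^*(u',v')$ by $\epsilon$ and decrease $a^*(u^*,v')$ and $a^*(u',v_1)$ by $\epsilon$. All four edges lie in $E$: the pairs incident to $v_1$ have left endpoint of size at most $s(v_1)$ by construction, and for $(u',v')$ we use $s(u')\le s(u^*)\le s(v')$, where the last inequality comes from $(u^*,v')\in E$. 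Node capacities are preserved and the objective is unchanged, since each side of the rotation contributes $s(u^*)+s(u')$. If instead $u^*$ is not used anywhere except possibly at $v_1$ under $a^*$, then $u^*$ retains residual capacity; in that case either $v_1$ has slack---and increasing $a^*(u^*,v_1)$ strictly improves the objective, contradicting optimality---or some $u'$ with $s(u')<s(u^*)$ satisfies $a^*(u',v_1)>0$, and shifting $\epsilon$ from $(u',v_1)$ to $(u^*,v_1)$ strictly increases the objective by $(s(u^*)-s(u'))\epsilon$, again a contradiction. Iterating the rotation raises $a^*(u^*,v_1)$ to $\delta$.

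The only delicate point is verifying that the rotation's four edges all lie in $E$ and that capacities are preserved; both hinge on the maximality of $s(u^*)$ among eligible left nodes at $v_1$, which is exactly the content of the greedy rule. I do not expect to need any tool beyond this local exchange; in particular, the claimed independence of the outcome from the processing order of $V_r$ follows as a byproduct, since the inductive argument did not rely on any particular choice of $v_1$.
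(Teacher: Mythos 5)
The paper states Observation~\ref{obs:GreedyAssignment} without proof, treating it as immediate from the nested structure of the edge set (in a packing graph, $s(v)\le s(v')$ for $v,v'\in V_r$ forces $N(v)\subseteq N(v')$, so neighborhoods form a chain). Your exchange-argument approach is the natural way to make this rigorous, and the core rotation step is correct: both feasibility of the four edges and preservation of the objective are exactly what the maximality of $s(u^*)$ among nodes eligible for $v_1$ buys you. However, two technical points need tightening.

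First, your induction is on $|V_r|$, but the quantity you peel off is a single greedy \emph{step}, which only removes $v_1$ from $V_r$ when $\delta = \text{mult}(v_1)$. When $\delta = \text{mult}(u^*) < \text{mult}(v_1)$, the residual instance has the same right node set and the induction does not advance. You should instead induct on something that strictly decreases with each greedy step, e.g.\ the number of nodes in $V_\ell\cup V_r$ with positive residual multiplicity, or the number of edges used by the greedy assignment.

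Second, your case split is not exhaustive. Case A requires \emph{both} a $v'\ne v_1$ with $a^*(u^*,v')>0$ and a $u'$ with $a^*(u',v_1)>0$; Case B is the negation of the first conjunct only. The missing case---some $v'\ne v_1$ carries $u^*$ but no $u'\ne u^*$ is assigned to $v_1$---is easy (then $a^*(\delta(v_1)) = a^*(u^*,v_1) < \delta \le \text{mult}(v_1)$ so $v_1$ has slack, and one can shift $\epsilon$ from $(u^*,v')$ to $(u^*,v_1)$ at no cost), but it should be stated. Relatedly, in Case B you claim a \emph{strict} objective increase from $s(u')<s(u^*)$, but two distinct containers in $V_\ell$ may share a size, so $s(u')=s(u^*)$ is possible; in that tie case there is no contradiction with optimality, just an objective-preserving transformation, which is all you need---the framing as a contradiction should be softened. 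With these repairs the argument goes through.
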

In this paper we further restrict ourselves to \emph{left-integral} packing graphs --- that is, for any $v\in V_\ell$, $\text{mult}(v)\in \setZ_{\ge 0}$.
We construct two packing graphs: one responsible for the assignment of items to containers and one for
assigning containers to bins.
\begin{itemize}
\item \emph{Assigning items to containers:} Given $b\in\setZ_{\ge 0}^n$,$y\in\setZ_{\ge 0}^\pazocal{C}$, we define a packing graph $G_1(b,y)$ as follows. 
The left nodes of the graph are defined by $V_\ell=[n]$, with sizes $s_i$ and multiplicities $b_i$.
The right nodes are defined by $V_r = \{ (i,C) : C \in \pazocal{C}, i \in C \}$ with the size of node $(i,C)$ given by $s_i$ and multiplicity by $y_C\cdot C_i$.

\item \emph{Assigning containers to patterns:} Given $y\in\setZ_{\ge 0}^\pazocal{C}$ and $x\in\setR_{\ge 0}^\pazocal{P}$, we define a packing graph $G_2(x,y)$. The left nodes are given by  $\pazocal{C}$ with sizes given by the sizes of containers, and multiplicities $y_C$.
 The right nodes are given by $ \{ (C,p) : C \in \pazocal{C}, p \in \pazocal{P} \}$, with the size of node $(C,p)$ given by $s(C)$ and the multiplicity by
$x_C \cdot p_{C}$. 
\end{itemize}
We then define the deficiency of the pair $(x,y)$ with item multiplicities $b$ to be the sum
$$\textrm{def}_b(x,y) := \textrm{def}(G_1(b,y)) + \textrm{def}(G_2(x,y)).$$ 
In later sections we will often leave off the $b$ to simplify notation.

We should discuss why the 2-stage packing via the containers is useful. 
First of all, it is easy to find \emph{some} initial configuration. 
\begin{lemma} \label{lem:StartingSolution}
For any bin packing instance $(s,b)$, one can compute a ``starting solution''  $x \in \mathbb{R}_{\geq 0}^{\pazocal{P}}$
and $y \in \mathbb{Z}_{\geq 0}^{\pazocal{C}}$
 in polynomial time so that $\bm{1}^Tx \leq OPT_f + 1$ and $\textrm{def}(x,y) = 0$ with $|\textrm{supp}(x)| \leq n$.
\end{lemma}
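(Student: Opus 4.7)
The plan is to lift a basic feasible solution of the Gilmore-Gomory LP verbatim into the two-stage formulation, using only singleton containers; at initialization, the two-stage framework has no substantive work to do beyond matching the original LP.

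First, I will invoke the approximate LP solver described after~\eqref{eq:GilmoreGomory} (with accuracy, say, $\delta = \tfrac{1}{2}$) to produce a fractional $\tilde{x} \ge \bm{0}$ feasible for the Gilmore-Gomory LP with $\bm{1}^T \tilde{x} \le OPT_f + \tfrac{1}{2}$. A standard null-space pivoting routine then converts $\tilde{x}$ into a basic feasible solution $x^* \in \setR_{\ge 0}^{\pazocal{P}^{\textrm{orig}}}$ with $\bm{1}^T x^* \le \bm{1}^T \tilde{x} \le OPT_f + 1$ and $|\supp(x^*)| \le n$, since~\eqref{eq:GilmoreGomory} has only $n$ nontrivial constraints.

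Next, I will take the containers to be singletons: set $C_i := e_i \in \setZ_{\ge 0}^n$ (one copy of item $i$) and $y_{C_i} := b_i$ for each $i \in [n]$, with $y_C := 0$ for every other $C \in \pazocal{C}$. For each $p^{\textrm{orig}} \in \supp(x^*)$ I will define the corresponding container-pattern $p \in \pazocal{P}$ by $p_{C_i} := p^{\textrm{orig}}_i$ and $p_C := 0$ otherwise; this lies in $\pazocal{P}$ because $\sum_{C} p_C \, s(C) = \sum_i p^{\textrm{orig}}_i s_i \le 1$. Finally I set $x_p := x^*_{p^{\textrm{orig}}}$ on these patterns and zero elsewhere. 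The numerical bounds $\bm{1}^T x = \bm{1}^T x^* \le OPT_f + 1$ and $|\supp(x)| \le n$ are then immediate from the construction.

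It remains to verify $\textrm{def}(x,y) = 0$. For $G_1(b,y)$, the assignment $a(i,(i,C_i)) := b_i$ fully packs every item, since the right node $(i,C_i)$ has matching size $s_i$ and multiplicity $y_{C_i} \cdot (C_i)_i = b_i$, so by Observation~\ref{obs:GreedyAssignment} the greedy assignment saturates every left node. For $G_2(x,y)$, each left node $C_i$ (multiplicity $b_i$, size $s_i$) is matched to the right nodes $(C_i,p)$, which carry total multiplicity $\sum_p x_p \, p_{C_i} = \sum_{p^{\textrm{orig}}} x^*_{p^{\textrm{orig}}} \cdot p^{\textrm{orig}}_i \ge b_i$ by feasibility of $x^*$ for the Gilmore-Gomory LP. The only nontrivial ingredient in the argument is producing a basic approximate LP solution in polynomial time, which is a well-established routine; no genuine obstacle arises.
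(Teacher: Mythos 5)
Your proposal is correct and takes essentially the same route as the paper: approximately solve the Gilmore-Gomory LP, pass to a basic solution to get $|\textrm{supp}(x)|\le n$, lift each pattern to a container-pattern via singleton containers, and observe that both packing graphs have zero deficiency. The paper's own proof is just a two-line version of this; you have simply spelled out the basicness step and the deficiency verification that the paper leaves implicit.
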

\begin{proof}
As we already argued, one can compute a 
 fractional solution $x$ for \eqref{eq:GilmoreGomory} in polynomial time that 
has cost $\bm{1}^Tx \leq OPT_f + 1$. We simply use singleton containers $\{i\}$ for all items $i \in \{ 1,\ldots,n\}$ 
and set $y_{\{i\}} := b_i$. 
\end{proof}

Next, we argue that our notation of deficiency was actually meaningful in recovering an assignment
of items to bins. 
\begin{lemma} \label{lem:PackingItemsWith2DefItems}
Suppose that $x \in \setZ_{\geq 0}^{\pazocal{P}},y\in \setZ_{\geq 0}^{\pazocal{C}}$ are both integral. 
Then there is a 
packing of all items into at most $\bm{1}^Tx + 2\textrm{def}(x,y) + 1$ bins. 
\end{lemma}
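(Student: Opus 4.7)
The plan is to use the two-stage structure directly: realize the integral $x$ as actual bins full of container-slots, funnel items through the two assignments, and then clean up whatever could not be fit using the greedy bound of Lemma~\ref{lem:GreedyPacking}.

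First I would physically instantiate the $\bm 1^T x$ bins dictated by $x$. Each such bin corresponds to one copy of some pattern $p$, and therefore comes equipped with $p_C$ slots, each of capacity $s(C)$, for every container type $C$. Nothing is packed yet; what we have is a template. Next I would apply Observation~\ref{obs:GreedyAssignment} to obtain an optimal assignment $a_1$ in $G_1(b,y)$: this places items into the container-slots $(i,C)$ with multiplicity $y_C\cdot C_i$, where an item of size $s_j$ may occupy a slot of size $s_i\ge s_j$. Since the edges respect the size constraint and the total slot-capacity inside a single copy of container $C$ is $\sum_i C_i s_i = s(C)\le 1$, every container is still a feasible packing after this step. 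The items left over have total size exactly $\mathrm{def}(G_1(b,y))$, and we set them aside.

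Second I would apply Observation~\ref{obs:GreedyAssignment} again, this time to $G_2(x,y)$, which assigns the (now filled) containers of the multiset $y$ into the pattern-slots $(C,p)$ inside the bins from step one. Again the assignment respects sizes, so inside any particular bin the total size of containers actually placed is at most $\sum_C p_C s(C)\le 1$, which means the resulting packing is valid. The containers that fail to be placed have total size exactly $\mathrm{def}(G_2(x,y))$; the items inside those unplaced containers therefore have total size at most $\mathrm{def}(G_2(x,y))$ as well.

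Combining the two kinds of leftovers (the items not assigned by $a_1$, and the items that sat inside containers not assigned by $a_2$), the remaining unpacked items have total size at most
\[
\mathrm{def}(G_1(b,y)) + \mathrm{def}(G_2(x,y)) \;=\; \mathrm{def}(x,y).
\]
Feeding these leftover items into the greedy procedure of Lemma~\ref{lem:GreedyPacking} produces a valid packing into at most $2\,\mathrm{def}(x,y)+1$ additional bins. Adding this to the $\bm 1^T x$ bins of the first stage gives the claimed bound of $\bm 1^T x + 2\,\mathrm{def}(x,y) + 1$.

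The argument is essentially bookkeeping; the only thing that needs a moment of thought is the mild type-shift in the second step, namely that an item-mass bound on the leftovers of $G_2$ comes for free from the size bound on a container, so no separate charging argument is needed. The use of Lemma~\ref{lem:GreedyPacking} as a black box at the end is what enables the clean $2\,\mathrm{def}(x,y)+1$ overhead.
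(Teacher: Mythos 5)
Your proof is correct and follows essentially the same route as the paper's: instantiate the $\bm 1^T x$ bins, use integral assignments from $G_1$ and $G_2$ (both exist because $x,y$ are integral, which is why the hypothesis matters and should be flagged), and pack the residual mass of total size $\textrm{def}(x,y)$ greedily via Lemma~\ref{lem:GreedyPacking}. The only cosmetic difference is that the paper packs the stranded containers as intact objects of size $s(C)\le 1$, while you dissolve them into their constituent items before greedy packing; both give the identical $2\,\textrm{def}(x,y)+1$ bound.
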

\begin{proof}
Since $x$ and $y$ are both integral, all multiplicities
in $G_1$ and $G_2$ will be integral and we can find two integral assignments $a_1,a_2$ attaining $\textrm{def}(x,y)$.
 Buy all the patterns suggested by $x$. Use $a_2$ to pack the containers in $y$. Then use $a_1$ to map the items to 
containers. There are some items that will not be assigned --- their total size is $\textrm{def}(G_1(b,y))$. 
Moreover, there might also
be containers in $y$ that have not been assigned; their total size is $\textrm{def}(G_2(x,y))$. 
We pack items and containers greedily into at most $2\textrm{def}(x,y) + 1$ many extra bins using Lemma~\ref{lem:GreedyPacking}. 
\end{proof}

In each iteration of our algorithm, it will be useful for us to be able fix the integral part of $x$ and focus solely on the fractional part.

\begin{lemma} \label{lem:SplittingOffFractionalPart}
Suppose $x\in \setR_{\geq 0}^\pazocal{P}, y\in\setZ_{\ge 0}^\pazocal{C}$, and 
$b\in \setZ_{\ge 0}^n$. 
If $\hat{x}_p=\lfloor x_p \rfloor$ for all patterns $p$, 
then there exist vectors $\hat{y}\in\setZ_{\ge 0}^\pazocal{C}$, 
$\hat{b}\in\setZ_{\ge 0}^n$ with $\hat{b} \leq b$ so that 
$\text{def}_{\hat{b}}(\hat{x},\hat{y})=0$ and 
$\text{def}_{b-\hat{b}}(x-\hat{x},y-\hat{y})=\text{def}_b(x,y)$. 
\end{lemma}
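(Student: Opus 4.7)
The plan is to define $\hat{y}$ and $\hat{b}$ by greedily packing into the integer parts of $x$ and $y$, using Observation~\ref{obs:GreedyAssignment} both to guarantee that the resulting vectors are integral and to show that no deficiency is lost in the splitting.

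First, I would construct $\hat{y}$ by applying the greedy assignment of Observation~\ref{obs:GreedyAssignment} to the packing graph $G_2(\hat{x},y)$. Both sides of this graph have integer multiplicities ($y_C$ on the left and $\hat{x}_p\cdot p_C$ on the right), and each greedy step raises an $a$-value only until either the current left or current right node saturates, which happens at an integer value. Hence the resulting assignment $a_I$ is integer valued, and setting $\hat{y}_C := \sum_v a_I(C,v)$ gives $\hat{y}\in\setZ_{\ge 0}^{\pazocal{C}}$ with $\hat{y}\le y$. Since $a_I$ packs all of $\hat{y}$ into $G_2(\hat{x},\hat{y})$ (which has the same right-side multiplicities as $G_2(\hat{x},y)$), we obtain $\textrm{def}(G_2(\hat{x},\hat{y}))=0$.

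Next, I would prove the identity $\textrm{def}(G_2(x,y)) = \textrm{def}(G_2(x-\hat{x}, y-\hat{y}))$ via a two-phase greedy argument. Split each right node $(C,p)$ of $G_2(x,y)$ into an ``integer'' copy of multiplicity $\hat{x}_p\cdot p_C$ and a ``fractional'' copy of multiplicity $(x_p-\hat{x}_p)\cdot p_C$; this does not alter the graph's deficiency. Observation~\ref{obs:GreedyAssignment} permits any ordering of right nodes, so I process the integer copies first. The integer phase is literally a greedy run on $G_2(\hat{x},y)$, and therefore packs exactly $\hat{y}_C$ copies of each container $C$, leaving residual left-side multiplicities $y_C-\hat{y}_C$. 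The subsequent fractional phase is then a valid greedy run on $G_2(x-\hat{x},y-\hat{y})$, and the total unpacked left mass at the end simultaneously equals $\textrm{def}(G_2(x,y))$ (by the observation applied to the split graph) and $\textrm{def}(G_2(x-\hat{x},y-\hat{y}))$ (by the observation applied to the residual), giving the claim. The construction of $\hat{b}$ is entirely symmetric: apply Observation~\ref{obs:GreedyAssignment} to $G_1(b,\hat{y})$, whose multiplicities $b_i$ and $\hat{y}_C\cdot C_i$ are both integer (using that $\hat{y}$ is), to obtain an integer assignment, and let $\hat{b}_i$ be the total amount of left node $i$ packed. Then $\hat{b}\in\setZ_{\ge 0}^n$ with $\hat{b}\le b$ and $\textrm{def}(G_1(\hat{b},\hat{y}))=0$, and the same split-and-reorder argument applied to $G_1(b,y)$ (splitting $y_C\cdot C_i$ into $\hat{y}_C\cdot C_i$ and $(y_C-\hat{y}_C)\cdot C_i$) yields $\textrm{def}(G_1(b,y))=\textrm{def}(G_1(b-\hat{b},y-\hat{y}))$. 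Summing the two deficiency identities gives $\textrm{def}_{b-\hat{b}}(x-\hat{x},y-\hat{y})=\textrm{def}_b(x,y)$, while summing the two zero-deficiency statements gives $\textrm{def}_{\hat{b}}(\hat{x},\hat{y})=0$.

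The main obstacle, and the step most worth writing out carefully, is the split-and-reorder argument: one must observe that the greedy rule of Observation~\ref{obs:GreedyAssignment} is ``memoryless'' in the sense that it depends only on the currently available left multiplicities and the size of the current right node. Consequently, the graph state after the integer phase coincides with the initial state of $G_2(x-\hat{x},y-\hat{y})$ (resp.\ $G_1(b-\hat{b},y-\hat{y})$), so continuing the greedy reproduces the deficiency of that residual instance; since any ordering yields an optimum, this common quantity also equals the deficiency of the original graph. Everything else---integrality of $\hat{y}$ and $\hat{b}$, and the bound $\hat{b}\le b$---is an immediate byproduct of starting from integer data.
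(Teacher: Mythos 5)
Your proof is correct and follows essentially the same approach as the paper: both split each right node of $G_2(x,y)$ into an integer (``red'') part of multiplicity $\hat{x}_p p_C$ and a fractional (``blue'') part of multiplicity $(x_p-\hat{x}_p)p_C$, run the greedy of Observation~\ref{obs:GreedyAssignment} processing the integer copies first, and define $\hat{y}_C$ as the amount of $C$ consumed in the integer phase, with the $G_1$ side handled analogously. The only cosmetic difference is that the paper dismisses the $G_1$ step as ``trivial'' since $G_1(b,y)$ already has all-integral multiplicities, whereas you spell out the same split-and-reorder argument there as well.
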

\begin{proof}
Let us imagine that we replace each node $(C,p)$ in $G_2(x,y)$ with two copies, a ``red'' node
and a ``blue'' node. 
The red copy receives an integral multiplicity of $\text{mult}_\text{red}(C,p)=\hat{x}_p p_C$
while the blue copy receives a fractional multiplicity of 
$\text{mult}_\text{blue}(C,p)=(x_p-\hat{x}_p)\cdot p_C$. Now we apply 
Observation~\ref{obs:GreedyAssignment} to find the best assignment $a$. Crucially, 
we set up the order of the right hand side nodes so that we first process the red integral
nodes and then the blue fractional ones. Note that the assignment that this greedy procedure
computes is optimal and moreover, the assignments for red nodes will be integral. 
For each container $C$ on the left, we define 
$\hat{y}_C$ to be the total red multiplicity of its targets under this optimal assignment.
Then $\text{def}(G_2(\hat{x},\hat{y}))=0$ and $\text{def}(G_2(x-\hat{x},y-\hat{y}))=\text{def}(G_2(x,y))$.
In the graph $G_1(b,y)$, all multiplicities are integral anyway, so 
we can trivially find an integral vector $\hat{b}$ so that 
$\text{def}(G_1(\hat{b},\hat{y}))=0$ and $\text{def}(G_1(b-\hat{b},y-\hat{y}))=\text{def}(G_1(b,y))$.
\end{proof}

Define $\text{supp}(x) := \{ p \in \pazocal{P} : x_p > 0\}$ as the support of $x$ and
$\textrm{frac}(x) := \{ p \in \pazocal{P}: 0<x_p<1\}$ as the patterns in $p$ that
are still fractional.
Now we have enough notation to state our main technical theorem: 
\begin{theorem} \label{thm:OneIteration}
Let $(s,b)$ be an instance with $s_1,\ldots,s_n \geq s_{\min}>0$. Let $y \in \setZ_{\geq 0}^{\pazocal{C}}$
and $x \in [0,1[^{\pazocal{P}}$ with $|\text{supp}(x)|\geq L\log(\frac{1}{s_{\min}})$, where $L$ is a large enough constant. 
Then there is a randomized polynomial time algorithm 
that finds $\tilde{y} \in \setZ_{\geq 0}^{\pazocal{C}}$ and $\tilde{x} \in \setR_{\geq 0}^{\pazocal{P}}$
with $\bm{1}^T\tilde{x} = \bm{1}^Tx$ and $\textrm{def}(\tilde{x},\tilde{y}) \leq \textrm{def}(x,y) + O(1)$ while $|\textrm{frac}(\tilde{x})| \leq \frac{1}{2}|\textrm{frac}(x)|$.
\end{theorem}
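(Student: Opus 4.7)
The plan is to apply the Lovett--Meka algorithm once to the support of $x$, enforcing exact preservation of the LP cost together with exact preservation of the total size-supply in each of $O(\log(1/s_{\min}))$ dyadic size classes of containers. The hypothesis $|\textrm{supp}(x)|\ge L\log(1/s_{\min})$ is used precisely to fit this many equality constraints into the Lovett--Meka entropy budget.

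Write $N:=|\textrm{supp}(x)|$; by hypothesis $x_p\in(0,1)$ for every $p\in\textrm{supp}(x)$, so $|\textrm{frac}(x)|=N$. Partition container sizes dyadically: for $k=0,1,\ldots,K$ with $K=\lceil\log_2(1/s_{\min})\rceil$, let $S_k:=\{C\in\pazocal{C}:s(C)\in[2^{-k-1},2^{-k})\}$. Introduce vectors $v_k\in\setR^{\textrm{supp}(x)}$ given by $v_k(p):=\sum_{C\in S_k}p_C\cdot s(C)\in[0,1]$, the total class-$k$ size supplied by pattern $p$, together with the cost vector $v_{\textrm{cost}}\equiv 1$. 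Invoke Lovett--Meka with every $\lambda_j=0$: the entropy sum is $K+2\le N/16$ whenever $L$ is a sufficiently large constant, so the theorem applies. It returns $\tilde x$ with (a) $\bm{1}^T\tilde x=\bm{1}^T x$, (b) $\langle\tilde x-x,v_k\rangle=0$ for every $k$, and (c) at least $N/2$ entries in $\{0,1\}$, immediately giving $|\textrm{frac}(\tilde x)|\le N/2=\tfrac{1}{2}|\textrm{frac}(x)|$.

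To construct $\tilde y$, I process size classes from largest to smallest. Since $\langle\tilde x-x,v_k\rangle=0$, the total size of class-$k$ slots on the right of $G_2$ is unchanged, although the distribution across specific containers inside the class may have shifted. Exploiting that all sizes in $S_k$ lie within a factor $2$ of each other, one sets $\tilde y_C$ for $C\in S_k$ greedily to re-absorb the preserved supply, with a final integer rounding step. This rounding step displaces at most one container of class $k$, contributing $O(2^{-k})$ to the total deficiency (either as unpacked items in $G_1$ or as unmatched supply in $G_2$). Summing the geometric series over $k$ gives $\textrm{def}(\tilde x,\tilde y)\le\textrm{def}(x,y)+O(1)$.

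The main obstacle is showing that within a single class $S_k$ the rebuilding of $\tilde y$ loses only $O(2^{-k})$ rather than a loss proportional to the full class volume. This requires combining the preserved size-class total (so that no continuous slack is wasted) with the factor-$2$ size range inside a class (which makes each integer-rounding defect small). Once this per-class guarantee is in place, the remaining arguments---the Lovett--Meka entropy budget and the geometric sum over classes---are essentially bookkeeping.
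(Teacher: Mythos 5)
Your proposal diverges fundamentally from the paper's proof, and unfortunately it has a genuine gap that cannot be repaired without essentially reinventing the paper's argument.

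The crux of the problem: you apply Lovett--Meka once with $\lambda=0$ on roughly $\log(1/s_{\min})$ vectors, each aggregating the \emph{total size supply} in one dyadic class $S_k$. This preserves only the single scalar $\langle v_k, x\rangle$ per class, but the deficiency in $G_2$ is controlled by \emph{all prefix sums} $\sum_{j\le i} A_{C_j}(\tilde x - x)$ down through the size-sorted list of containers (this is exactly the hypothesis of Lemma~\ref{lem:DefIncrease}). Within a class $S_k$, Lovett--Meka with only one constraint gives you no control over how the supply redistributes among container types: if $x_p=0.5$ is rounded to $0$ and $x_{p'}=0.5$ is rounded to $1$, where $p$ supplies slots for a container of size $0.9\cdot 2^{-k}$ and $p'$ supplies slots for one of size $0.55\cdot 2^{-k}$, the preserved aggregate $v_k$ hides a shift of $\Theta(1)$ per rounded pattern, so the prefix-sum error inside the class can be $\Theta(|\textrm{supp}(x)|)$ rather than $O(1)$. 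Your claim that "the rounding step displaces at most one container of class $k$" confuses the final integer-rounding slack with the rounding discrepancy from Lovett--Meka itself, which is the dominant term and is not controlled at all by your constraints.

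A second gap is in the construction of $\tilde y$. If you redefine $\tilde y$ to absorb whatever slots $\tilde x$ happens to offer, you are changing which container types are present in $\tilde y$, and then $\textrm{def}(G_1(b,\tilde y))$ --- whether the original items still fit in these new containers --- is completely unaddressed. Conversely, if you keep $\tilde y = y$, the size shifts inside $S_k$ mean containers in $y$ may no longer fit the available slots. In the paper, $\tilde y$ is produced \emph{before} the Lovett--Meka step, by the container-reassignment (gluing) lemma, precisely so that the item-to-container assignment remains under control; LM then changes only $x$, not $y$.

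What your approach misses, conceptually, is the paper's step (I): the grouping and gluing of containers that forces properties (A) (each active container row has $\|\tilde A_C\|_1 \ge (1/\sigma)^{1/2}$) and (B) ($\|A_C\|_\infty \le (1/\sigma)^{1/4}$), which bound $\|A_C\|_2$ by $\tilde n(C)\cdot\sigma^{1/8}$. Only with these bounds does the paper's fine-grained family of intervals $\pazocal I_{\sigma,\ell}$ with $\lambda_I=\ell$ fit the Lovett--Meka entropy budget \emph{while also} controlling all prefix sums to $O((1/\sigma)^{15/16})$. Replacing this multiscale interval system with one constraint per size class at $\lambda=0$ saves a lot of bookkeeping but gives a discrepancy guarantee that is far too weak to bound the deficiency.
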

While it will take the remainder of this paper to prove the theorem, the algorithm behind the statement
can be split into the following two steps: 
\begin{enumerate}
\item[(I)] \emph{Rebuilding the container assignment}: We will change the assignments for the 
pair $(x,y)$ so that for every container in size class $\sigma$ the patterns in supp$(x)$ use, they use nearly $(\frac{1}{\sigma})^{1/2}$ copies, while no individual pattern in $\textrm{supp}(x)$ contains more than
$(\frac{1}{\sigma})^{1/4}$ copies of the same container. 
\item[(II)] \emph{Application of Lovett-Meka}: We will apply the Lovett-Meka algorithm to sparsify the fractional solution $x$. Here, the vectors $v_j$ that comprise the input for the LM-algorithm will correspond to
sums over intervals of rows of the constraint matrix $A$. Recall that the error bound provided by Lovett-Meka crucially depends on the lengths $\|v_j\|_2$. The procedure in $(I)$ will ensure 
that the Euclidean length of those vectors is small. 
\end{enumerate}

Once we have proven Theorem~\ref{thm:OneIteration}, the main result easily follows: 
\begin{proof}[Proof of Theorem~\ref{thm:MainContribution}]
We compute a fractional solution $x$ to \eqref{eq:GilmoreGomory} of cost $\bm{1}^Tx \leq OPT_f+1$. 
In fact, we can assume that $x$ is a basic solution to the LP and hence $|\textrm{supp}(x)| \leq n$. 
We construct a container assignment $y$ consisting only of singletons, 
see Lemma~\ref{lem:StartingSolution}.
Then for $\log(n)$ iterations, we first use Lemma~\ref{lem:SplittingOffFractionalPart} 
to split the current solution $x$ as $x = x^{\textrm{int}} + x^{\textrm{frac}}$ where $x^{\textrm{int}}_p = \lfloor x_p \rfloor$ and obtain a corresponding split $y = y^{\textrm{int}} + y^{\textrm{frac}}$.
Then we run Theorem~\ref{thm:OneIteration} with input $(x^{\textrm{frac}},y^{\textrm{frac}})$
and denote the result by $(\tilde{x}^{\textrm{frac}},\tilde{y}^{\textrm{frac}})$. Finally we 
update $x := x^{\textrm{int}} + \tilde{x}^{\textrm{frac}}$ and $y := y^{\textrm{int}} + \tilde{y}^{\textrm{frac}}$.

As soon as $|\textrm{frac}(x)| \leq O(\log \frac{1}{s_{\min}})$, we can just buy every pattern in frac$(x)$. 
In each iteration the deficiency increases by at most $O(1)$. At the end, we use Lemma~\ref{lem:PackingItemsWith2DefItems} 
to actually pack the items into bins. We arrive at a solution of cost $OPT_f + O(\log \max\{ n, \frac{1}{s_{\min}} \})$
which is enough, using Lemma~\ref{lem:AssumptionSizesAtLeast1-n}.
\end{proof}
We will describe the implementation of $(I)$ in Section~\ref{sec:RebuildingContainers} and
then $(II)$ in Section~\ref{sec:ApplyingLM}.

\section{Rebuilding the container assignment\label{sec:RebuildingContainers}}
In this section we assume that we are given $x\in [0,1[^\pazocal{P}$ with $|\text{supp}(x)|=m$. 
To ease notation, we will only write the nonzero parts of $x$, so that if supp$(x)=\{p_1,p_2,...,p_m\}$, then $x=(x_{p_i})_{i=1}^m$. We update $x$ by altering the patterns that make up its support. Even though some patterns could become identical, we continue to treat them as separate patterns.

Originally, we had defined $A$ as the incidence matrix of the Gilmore Gomory LP in \eqref{eq:GilmoreGomory}
where the rows correspond to items. Due to our 2-stage packing, we actually consider the patterns to be
multi-sets of containers, not items anymore. Hence, let us for the rest of the paper redefine the
meaning of $A$. 
Now, the rows of $A$ correspond to the containers in $\pazocal{C}$ ordered from largest to smallest, and columns represent the patterns in supp$(x)$. As we perform the grouping and container-forming operations, we update the columns of the matrix. The resulting columns then yield a new fractional solution $\tilde{x}$ by taking $x_{p_i}$ copies of the pattern now in column $i$.

We will now describe our grouping and container reassignment operations, keeping track of what happens to the fractional solution as well as to the corresponding matrix.

First, we need a lemma that tells us how rebuilding the fractional solution affects the deficiency. 
To have some useful notation, define $\text{mult}(C,x):=\sum_{p\in\pazocal{P}}\text{mult}(C,p)=\sum_{p\in\pazocal{P}}x_pp_C$ to be the number of times that the patterns cover container $C \in \pazocal{C}$. 

Now, if $\sum_{s(C)\ge s}\tilde{y}_C\le \sum_{s(C)\ge s}y_C$ for all $s\ge 0$, then we write $\tilde{y} \preceq y$.
Moreover, if $\sum_{s(C)\ge s}\text{mult}(C,\tilde{x})\ge \sum_{s(C)\ge s}\text{mult}(C,x)$ for all $s \geq 0$, then we write $\tilde{x}\succeq x$.
Observe that if $\tilde{y}\preceq y$ and $\tilde{x}\succeq x$, then $\text{def}(G_2(\tilde{x},\tilde{y})) \leq \text{def}(G_2(x,y))$. 
\begin{lemma} \label{lem:DefIncrease}Now suppose that $t_\sigma\ge 0$ is such that 
$$\sum_{s(C)\ge s}\text{mult}(C,\tilde{x})
\ge\left\{\begin{array}{lll}\sum_{s(C)\ge s}\text{mult}(C,x) & \text{ if } & s> \sigma\\
 \sum_{s(C)\ge s}\text{mult}(C,x)-t_\sigma & \text{ if } & s\le \sigma\end{array}\right.$$
Then $\text{def}(\tilde{x},y)\le \text{def}(x,y)+\sigma \cdot t_\sigma$.
\end{lemma}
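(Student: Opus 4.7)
The approach is to reduce to $G_2$, derive a closed-form tail expression for its deficiency, and then compare the tails pointwise. Since $G_1(b,y)$ depends only on $b$ and $y$ and is therefore unchanged, we have $\text{def}(\tilde{x},y) - \text{def}(x,y) = \text{def}(G_2(\tilde{x},y)) - \text{def}(G_2(x,y))$, and the hypothesis is stated exactly in terms of the right-hand multiplicities of $G_2$, so this reduction is natural.

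For a configuration $x$, define $L(s) := \sum_{s(C) \ge s} y_C$ and $R_x(s) := \sum_{s(C) \ge s} \text{mult}(C,x)$. Using the largest-first version of Observation~\ref{obs:GreedyAssignment} (equivalently, a short max-flow/min-cut computation on the staircase structure of $G_2$), I would show that the unmatched mass of left nodes of size at least $s$ in any optimal assignment is
\[
 U_x(s) \;=\; \max_{s^* \ge s} \max\bigl(0,\, L(s^*) - R_x(s^*)\bigr).
\]
The key point is that a right node of size $s(C)$ can absorb only left nodes of size $\le s(C)$, so at most $R_x(s^*)$ units of left demand of size $\ge s^*$ can ever be matched; the excess at any threshold $s^*$ is forced to be unmatched and survives at every smaller threshold as well. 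Integrating over $s$ (essentially integration by parts against the tail function) then gives $\text{def}(G_2(x,y)) = \int_0^\infty U_x(s)\, ds$, and analogously for $\tilde{x}$.

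With this formula in hand, the rest is bookkeeping. The hypothesis gives $L(s^*) - R_{\tilde{x}}(s^*) \le L(s^*) - R_x(s^*)$ for $s^* > \sigma$ and $L(s^*) - R_{\tilde{x}}(s^*) \le L(s^*) - R_x(s^*) + t_\sigma$ for $s^* \le \sigma$. Splitting the supremum defining $U_{\tilde{x}}(s)$ at the threshold $\sigma$, one obtains $U_{\tilde{x}}(s) \le U_x(s)$ for $s > \sigma$ and $U_{\tilde{x}}(s) \le U_x(s) + t_\sigma$ for $s \le \sigma$. Integrating this pointwise bound yields $\text{def}(G_2(\tilde{x},y)) - \text{def}(G_2(x,y)) \le \sigma \cdot t_\sigma$, as claimed.

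The main obstacle is establishing the tail formula for $\text{def}(G_2)$. Although intuitively clear, the argument must carefully reflect that small right nodes cannot help absorb larger left nodes---this is precisely why the formula takes a supremum over $s^* \ge s$ rather than just evaluating at $s^* = s$. Once this is done, the rest of the proof reduces to a direct comparison of two non-increasing functions and a one-line integration.
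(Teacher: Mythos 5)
Your argument is correct and proves the lemma, but it takes a genuinely different route from the paper. You derive a closed-form, Hall-type formula $\text{def}(G_2(x,y)) = \int_0^\infty U_x(s)\,ds$ with $U_x(s)=\max_{s^*\ge s}\max\bigl(0,\, L(s^*)-R_x(s^*)\bigr)$, then compare $U_{\tilde{x}}$ and $U_x$ pointwise and integrate. The paper instead uses a ``phantom container'' trick: let $C_0$ be the largest container of size at most $\sigma$, let $x'$ be $t_\sigma$ copies of the singleton pattern containing $C_0$, and let $y'$ be $t_\sigma$ copies of $C_0$. Then $\tilde{x}+x'\succeq x$, so the monotonicity observation stated just before the lemma gives $\text{def}(G_2(\tilde{x}+x',y))\le\text{def}(G_2(x,y))$; the phantom demand $y'$ can be routed to the phantom supply $x'$, so $\text{def}(G_2(\tilde{x},y))=\text{def}(G_2(\tilde{x}+x',y+y'))$; and since $y'$ has total size at most $\sigma t_\sigma$, adding it to the left increases the deficiency by at most $\sigma t_\sigma$. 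Chaining these gives the bound without ever computing the deficiency explicitly. Your approach yields a reusable structural characterization of $\text{def}(G_2)$, which has some independent value, but its correctness --- in particular that the greedy assignment simultaneously achieves the min-cut bound at every threshold $s$, so that the layer-cake integral really equals the minimum weighted deficiency --- is exactly the work you defer and exactly what the paper's local argument sidesteps. Both routes are sound; the paper's is shorter and stays entirely within the $\succeq/\preceq$ calculus already set up, while yours is heavier to justify but makes it more transparent where the $\sigma\cdot t_\sigma$ term comes from.
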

\begin{proof}
Let $C_0$ be the largest container of size at most $\sigma$, and let $x'$ be the vector representing $t_\sigma$ copies of the pattern containing a single copy of $C_0$.
Then $\tilde{x}+x'\succeq x$, and so $\text{def}(G_2(\tilde{x}+x',y))\le\text{def}(G_2(x,y))$.
But if $y'$ is the vector representing $t_\sigma$ copies of $C_0$, then $\text{def}(G_2(\tilde{x},y))=\text{def}(G_2(\tilde{x}+x',y+y'))$, since we can find an optimal assignment taking the containers in $y'$ to those of $x'$.
Since the total size of $y'$ is at most $\sigma t_\sigma$, we have $\text{def}(G_2(\tilde{x},y))\le \text{def}(G_2(\tilde{x}+x',y))+\sigma t_\sigma \le \text{def}(G_2(x,y))+ \sigma t_\sigma$, and therefore $\text{def}(\tilde{x},y)\le \text{def}(x,y)+\sigma t_\sigma$.
\end{proof}

If $\sigma$ is a power of 2, say $\sigma = 2^{-\ell}$ for $\ell \in \setZ_{\geq 0}$, then we say the \emph{size class}
of $\sigma$ is the set of items with sizes between $\frac{1}{2}\sigma$ and $\sigma$.  
In this next lemma, we round containers in patterns down so that each container type in size class $\sigma$ is either not used at all or is used at least $\frac{\delta}{\sigma}$ times.

\begin{lemma}[Grouping]
Let  $(s,b)$ be a bin packing instance with $y \in \mathbb{Z}_{\geq 0}^{\pazocal{C}}$ and $x \in [0,1[^{\pazocal{P}}$. For any size class $\sigma$ and $\delta>0$, we can find $\tilde{x} \in [0,1[^{\pazocal{P}}$
so that
\begin{enumerate}
\item $\bm{1}^T\tilde{x} = \bm{1}^Tx$
\item $|\textrm{supp}(\tilde{x})| \le |\textrm{supp}(x)|$
\item For each container type $C$ in size class $\sigma$, either $\text{mult}(C,\tilde{x})=0$ or $s(C)\cdot\text{mult}(C,\tilde{x})\geq \delta$. In all other size classes, the multiplicities of containers in patterns do not change. 

\item $\textrm{def}(\tilde{x},y) \leq \textrm{def}(x,y) + O(\delta)$.
\end{enumerate}
\end{lemma}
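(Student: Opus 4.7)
My plan is a De la Vega--Luecker style grouping restricted to the single size class $\sigma$. Let $C_1,\ldots,C_k$ be the distinct container types in size class $\sigma$ that are actually used by some pattern of $x$, ordered by \emph{decreasing} size, and set $r_i := s(C_i)\cdot \text{mult}(C_i,x)$. Containers with $r_i\ge \delta$ already satisfy property~3 and I leave them alone. The remaining ``light'' containers I partition into groups by sweeping from largest to smallest: keep adding containers to the current group and close it the first moment the accumulated weight $\sum_{i\in G}r_i$ reaches $2\delta$; the residual at the end forms a single leftover group with weight $<2\delta$. For each closed group $G=\{C_a,\ldots,C_b\}$ I declare the smallest member $C_b$ the survivor, and in every pattern $p\in\text{supp}(x)$ I replace each copy of $C_a,\ldots,C_{b-1}$ by a copy of $C_b$. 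For the leftover group I simply delete all of its containers from every pattern. The values $x_{p_i}$ themselves are never altered.

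Because each replacement is by a smaller container in the same size class, the total size of each pattern can only decrease, so $\tilde{x}$ remains supported on valid patterns of $\pazocal{P}$. Since every $x_{p_i}$ is preserved, property~1 is immediate, and property~2 holds because no new patterns are created (some patterns may just become empty). Containers outside size class $\sigma$ are untouched pattern by pattern, which gives the second half of property~3. Inside the size class: heavy containers are unchanged; a non-survivor or leftover container has $\text{mult}(\cdot,\tilde x)=0$; and for a group survivor $C_b$ the new multiplicity is $\sum_{i\in G}\text{mult}(C_i,x)\ge \sum_{i\in G} r_i/\sigma\ge 2\delta/\sigma$, so $s(C_b)\cdot \text{mult}(C_b,\tilde x)\ge (\sigma/2)(2\delta/\sigma)=\delta$, verifying the first half of property~3.

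For property~4 I would apply Lemma~\ref{lem:DefIncrease}. For $s>\sigma$ no cumulative count changes, since only containers of size $\le\sigma$ are modified. For $s\le\sigma$ the key observation is that, because containers are processed from largest to smallest and each group is collapsed to its \emph{smallest} member, the survivors themselves appear in decreasing order of size. Hence at any fixed $s$: (i) every group whose survivor lies above $s$ only moves mass upward and does not decrease the cumulative count at $s$; (ii) every group strictly after the first one whose survivor drops below $s$ consists entirely of containers no larger than that earlier survivor, hence smaller than $s$, and contributes nothing; (iii) consequently at most one ``transition'' group---either a single closed group, or else the leftover---can shed cumulative mass at level $s$. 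Since each light container has $r_i<\delta$, a closed group has weight $<3\delta$ at the moment it closes and its total multiplicity is at most $2\cdot 3\delta/\sigma=6\delta/\sigma$; the leftover has weight $<2\delta$ and multiplicity $<4\delta/\sigma$. Therefore $t_\sigma\le 6\delta/\sigma$, and Lemma~\ref{lem:DefIncrease} yields $\text{def}(\tilde x,y)\le \text{def}(x,y)+\sigma\cdot t_\sigma=\text{def}(x,y)+O(\delta)$. The one delicate point is the ``only one transition group contributes'' argument; once that is in place, the choice of grouping threshold $2\delta$ simultaneously secures the ``$\ge\delta$'' multiplicity bound on each survivor and the $O(\delta)$ bound on the deficiency increase.
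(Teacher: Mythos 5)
Your proof is correct and takes essentially the same approach as the paper's: sort the light containers in size class $\sigma$ from largest to smallest, greedily form groups with weight in $[2\delta,3\delta)$, collapse each group to its smallest member, delete the leftover group, and then bound the deficiency increase via the observation that for any threshold at most one group can shift cumulative mass below it, feeding $t_\sigma = O(\delta/\sigma)$ into Lemma~\ref{lem:DefIncrease}. The only cosmetic differences are that the paper closes groups in the range $[2\delta,3\delta]$ and allows the leftover to reach $3\delta$, whereas you close at $2\delta$ giving the same constants up to the $O(\cdot)$.
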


\begin{proof}
Assume containers are sorted by size, from largest to smallest.
Define $S_\delta$ to be the set of containers in size class $\sigma$ not satisfying condition (3) above. In other words, \\$S_\delta:=\{C \text{ in size class } \sigma\mid  0<s(C)\cdot\text{mult}(C,x)<\delta\}$.

For a subset $H\subset S_\delta$, define the weight of $H$ to be $w(H):=\sum_{C\in H} s(C)\cdot \text{mult}(C,x)$.
Note that the weight of a single container is at most $\delta.$
Hence we can partition $S_\delta=H_1\dot\cup H_2 \dot\cup ... \dot\cup H_r$ so that:

\begin{enumerate}
\item $w(H_k)\in[2\delta,3\delta], \forall k=1,...,r-1$.
\item $w(H_r)\le 3\delta$.
\item $C\in H_k, C'\in H_{k+1} \textrm{ implies } s(C)\ge s(C')$.
\end{enumerate}

For each $k=1,...,r-1$ and container $C \in H_k$, we replace containers of type $C$ in all patterns $p \in \text{frac}(x)$ 
with the smallest container type appearing in $H_k$. 
For all $C \in H_r$, remove containers of type $C$ from all patterns $p \in \text{frac}(x)$.
Call the updated vector $\tilde{x}$. We see immediately that $\bm{1}^T\tilde{x} = \bm{1}^Tx$ and $|\textrm{supp}(\tilde{x})|\le |\textrm{supp}(x)|$. 

Moreover, since every container type $C$ appearing in $\tilde{x}$ now has an entire group using it, and the weight of each container didn't change by more than a factor of $2$, we have $s(C)\cdot \text{mult}(C,\tilde{x})\ge \delta$, and so condition (3) is satisfied.
To complete the proof, it remains to show that $\textrm{def}(G_2(\tilde{x},y))\le \textrm{def}(G_2(x,y)) + O(\delta)$.

Now, for any $i$, there is at most one group $H_k$ whose containers (partly) changed from being larger than $s(C_i)$ to smaller. The weight of this group is at most $3\delta$, and so $\sum_{j\le i}\text{mult}(C_j,x)-\sum_{j\le i} \text{mult}(C_j,\tilde{x})\le\frac{6\delta}{\sigma}$.
Since this holds for all $i$, we can therefore apply Lemma~\ref{lem:DefIncrease} to conclude that $\textrm{def}(G_2(\tilde{x},y))\le \textrm{def}(G_2(x,y)) + O(\delta)$.
\end{proof}

We now remark what happens to the associated matrix $A$ under this grouping operation. Write $A,\tilde{A}$ as our original and updated matrices, and $A_C,\tilde{A}_C$ as the rows for container $C$. 
For container types $C$ in size class $\sigma$, either $\tilde{A}_{C}x=0$ or $s(C)\cdot \tilde{A}_{C}x\ge \delta$.
For all other size classes, $\tilde{A}_{C}=A_{C}$.
In particular, notice that we have either $\|\tilde{A}_C\|_1=0$ or $s(C) \cdot \|\tilde{A}_C\|_1 \ge \delta.$

Before we introduce the next main lemma --- how to reassign containers --- we prove a useful result about decomposing packing graphs in a nice way. For a visualization of the following lemma, 
see Figure~\ref{fig:ExampleDecomposition}. 
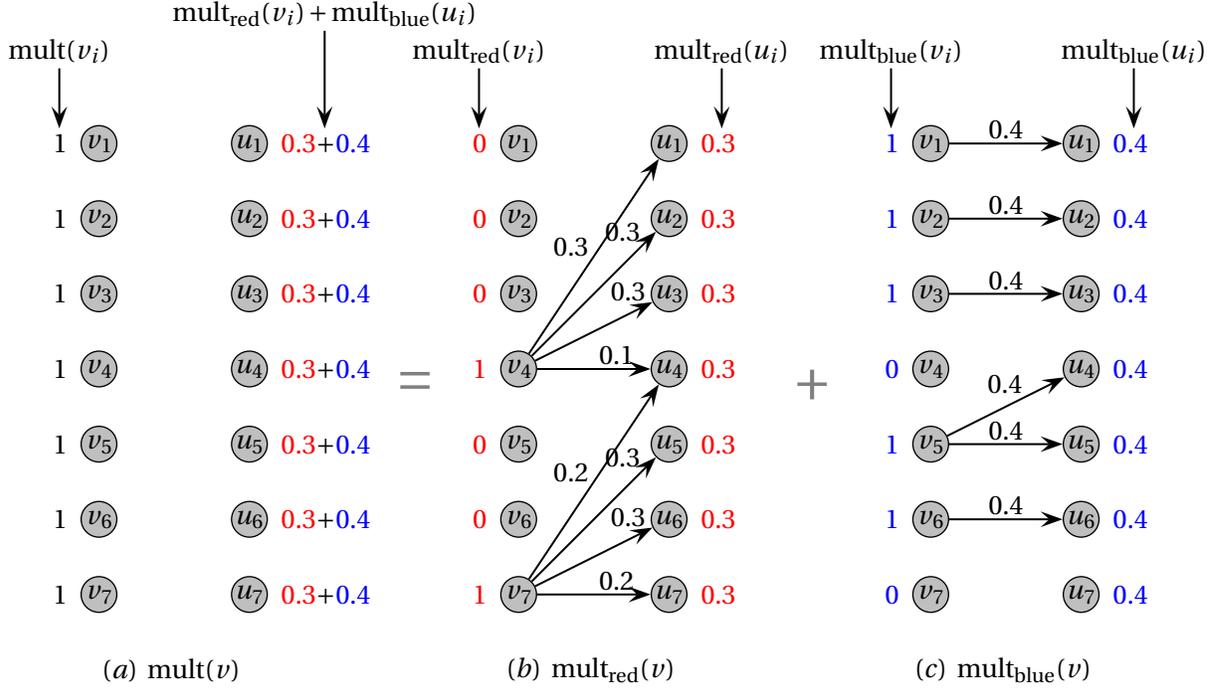
\begin{figure}
\begin{center}
\begin{pspicture}(0,0)(5.5,8)
\multido{\N=7+-1,\n=1+1}{7}{%
\cnode[linewidth=0.5pt,fillstyle=solid,fillcolor=lightgray](0,\N){7pt}{v\n}
\rput[c](v\n){$v_{\n}$}
\cnode[linewidth=0.5pt,fillstyle=solid,fillcolor=lightgray](2,\N){7pt}{u\n}
\rput[c](u\n){$u_{\n}$}
\nput[labelsep=5pt]{0}{u\n}{$\red{0.3}\black{+}\blue{0.4}$}
\nput[labelsep=5pt]{180}{v\n}{$1$}
}
\rput[c](4.2,3.8){\Huge{\gray{=}}}
\rput[c](1,0){$(a) \; \; \textrm{mult}(v)$ }
\pnode(-15pt,7.2){A} \pnode(-15pt,8){B} \ncline[linewidth=0.75pt]{->}{B}{A} \nput{90}{B}{$\textrm{mult}(v_i)$}
\pnode(3,7.2){C} \pnode(3,8.5){D} \ncline[linewidth=0.75pt]{->}{D}{C} \nput{90}{D}{$\textrm{mult}_{\textrm{red}}(v_i)+\textrm{mult}_{\text{blue}}(u_i)$}
\end{pspicture}
\begin{pspicture}(0,0)(5.4,8)
\multido{\N=7+-1,\n=1+1}{7}{%
\cnode[linewidth=0.5pt,fillstyle=solid,fillcolor=lightgray](0,\N){7pt}{v\n}
\rput[c](v\n){$v_{\n}$}
\cnode[linewidth=0.5pt,fillstyle=solid,fillcolor=lightgray](2,\N){7pt}{u\n}
\rput[c](u\n){$u_{\n}$}
\nput[labelsep=5pt]{0}{u\n}{$\red{0.3}$}
}
\nput[labelsep=5pt]{180}{v1}{\red $0$}
\nput[labelsep=5pt]{180}{v2}{\red $0$}
\nput[labelsep=5pt]{180}{v3}{\red $0$}
\nput[labelsep=5pt]{180}{v4}{\red $1$}
\nput[labelsep=5pt]{180}{v5}{\red $0$}
\nput[labelsep=5pt]{180}{v6}{\red $0$}
\nput[labelsep=5pt]{180}{v7}{\red $1$}
\ncline[linewidth=0.75pt]{->}{v4}{u1} \naput[labelsep=1pt,npos=0.5]{$0.3$}
\ncline[linewidth=0.75pt]{->}{v4}{u2} \naput[labelsep=1pt,npos=0.9]{$0.3$}
\ncline[linewidth=0.75pt]{->}{v4}{u3} \naput[labelsep=1pt,npos=0.9]{$0.3$}
\ncline[linewidth=0.75pt]{->}{v4}{u4} \naput[labelsep=1pt,npos=0.7]{$0.1$}
\ncline[linewidth=0.75pt]{->}{v7}{u4} \naput[labelsep=1pt,npos=0.5]{$0.2$}
\ncline[linewidth=0.75pt]{->}{v7}{u5} \naput[labelsep=1pt,npos=0.9]{$0.3$}
\ncline[linewidth=0.75pt]{->}{v7}{u6} \naput[labelsep=1pt,npos=0.9]{$0.3$}
\ncline[linewidth=0.75pt]{->}{v7}{u7} \naput[labelsep=1pt,npos=0.7]{$0.2$}
\rput[c](3.9,3.8){\Huge{\gray{+}}}
\rput[c](1,0){$(b) \; \; \textrm{mult}_{\textrm{red}}(v)$ }
\pnode(-15pt,7.2){A} \pnode(-15pt,8){B} \ncline[linewidth=0.75pt]{->}{B}{A} \nput{90}{B}{$\textrm{mult}_{\textrm{red}}(v_i)$}
\pnode(2.7,7.2){C} \pnode(2.7,8){D} \ncline[linewidth=0.75pt]{->}{D}{C} \nput{90}{D}{$\textrm{mult}_{\textrm{red}}(u_i)$}
\end{pspicture}
\begin{pspicture}(0,0)(2,8)
\multido{\N=7+-1,\n=1+1}{7}{%
\cnode[linewidth=0.5pt,fillstyle=solid,fillcolor=lightgray](0,\N){7pt}{v\n}
\rput[c](v\n){$v_{\n}$}
\cnode[linewidth=0.5pt,fillstyle=solid,fillcolor=lightgray](2,\N){7pt}{u\n}
\rput[c](u\n){$u_{\n}$}
\nput[labelsep=5pt]{0}{u\n}{$\blue{0.4}$}
}
\nput[labelsep=5pt]{180}{v1}{\blue $1$}
\nput[labelsep=5pt]{180}{v2}{\blue $1$}
\nput[labelsep=5pt]{180}{v3}{\blue $1$}
\nput[labelsep=5pt]{180}{v4}{\blue $0$}
\nput[labelsep=5pt]{180}{v5}{\blue $1$}
\nput[labelsep=5pt]{180}{v6}{\blue $1$}
\nput[labelsep=5pt]{180}{v7}{\blue $0$}
\ncline[linewidth=0.75pt]{->}{v1}{u1} \naput[labelsep=1pt,npos=0.5]{$0.4$}
\ncline[linewidth=0.75pt]{->}{v2}{u2} \naput[labelsep=1pt,npos=0.5]{$0.4$}
\ncline[linewidth=0.75pt]{->}{v3}{u3} \naput[labelsep=1pt,npos=0.5]{$0.4$}
\ncline[linewidth=0.75pt]{->}{v5}{u4} \naput[labelsep=1pt,npos=0.6]{$0.4$}
\ncline[linewidth=0.75pt]{->}{v5}{u5} \naput[labelsep=1pt,npos=0.5]{$0.4$}
\ncline[linewidth=0.75pt]{->}{v6}{u6} \naput[labelsep=1pt,npos=0.5]{$0.4$}
\rput[c](1,0){$(c) \; \; \textrm{mult}_{\textrm{blue}}(v)$ }
\pnode(-15pt,7.2){A} \pnode(-15pt,8){B} \ncline[linewidth=0.75pt]{->}{B}{A} \nput{90}{B}{$\textrm{mult}_{\textrm{blue}}(v_i)$}
\pnode(2.7,7.2){C} \pnode(2.7,8){D} \ncline[linewidth=0.75pt]{->}{D}{C} \nput{90}{D}{$\textrm{mult}_{\textrm{blue}}(u_i)$}
\end{pspicture}
\caption{Visualization of the decomposition of the left hand side multiplicities from Lemma~\ref{decomposition}. Here, $V_{\ell}=\{v_1,\ldots,v_7\}$ and $V_{r} = \{u_1,\ldots,u_7\}$. 
Assume that $s(u_i) = s(v_i)$ and $s(v_1) > \ldots > s(v_7)$. Nodes are labelled with their multiplicities. In $(b)$ and $(c)$ we also depict the assignments corresponding to the deficiencies. \label{fig:ExampleDecomposition}}
\end{center}
\end{figure}

\begin{lemma} 
 \label{decomposition} Suppose $G=(V_\ell\cup V_r,E)$ is a left-integral packing graph as in Section \ref{sec:Deficiency}, and that for every $v\in V_r$, we are given red and blue multiplicities so that $\text{mult}(v)=\text{mult}_\text{red}(v)+\text{mult}_\text{blue}(v)$.
Suppose further that all nodes $v\in V_r$ of size greater than $\sigma$ have $\text{mult}_\text{red}(v)=0$.
Then we can find left-integral packing graphs $G_\text{red}$ and $G_\text{blue}$ with the same edges, nodes, and sizes of $G$ but with multiplicities satisfying $\text{mult}_\text{red} + \text{mult}_\text{blue}=\text{mult}$. Moreover, we have 
$\text{def}(G_\text{red})=0$ and $\text{def}(G_\text{blue})\le \text{def}(G)+\sigma$.\end{lemma}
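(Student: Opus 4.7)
The plan is to split the left multiplicities into integer red and blue parts using a fractional optimal assignment of $G$ as a guide. First, I would invoke Observation~\ref{obs:GreedyAssignment} on $G$, processing the red parts of the right-side multiplicities before the blue parts, to obtain an optimal assignment $a^*$ attaining $\text{def}(G)$. For each left node $u$, let $r_u$ and $b_u$ denote the amounts of $a^*$ incident to $u$ sent to the red and blue portions of the right multiplicities respectively; note that $r_u = 0$ whenever $s(u) > \sigma$, and that the ``red part'' of $a^*$ is automatically a feasible fractional assignment in $G_\text{red}$.

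Next I would define $\text{mult}_\text{red}$ by a prefix-floor procedure: sort the left nodes by decreasing size as $u_1,\dots,u_n$, let $P_i := \sum_{j\le i} r_{u_j}$, and recursively define $Q_0 := 0$, $Q_i := \min\bigl(\lfloor P_i\rfloor,\, Q_{i-1}+\text{mult}(u_i)\bigr)$, and $\text{mult}_\text{red}(u_i) := Q_i - Q_{i-1}$. These are integers in $[0, \text{mult}(u_i)]$, and the chain $Q_i \le P_i \le \sum_{v:\, s(v)\ge s(u_i)} \text{mult}_\text{red}(v)$ (where the second inequality uses feasibility of the red part of $a^*$) is exactly the Hall-type prefix condition. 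Observation~\ref{obs:GreedyAssignment} then certifies that $\text{def}(G_\text{red}) = 0$.

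Setting $\text{mult}_\text{blue}(u) := \text{mult}(u) - \text{mult}_\text{red}(u)$ and using the blue portion of $a^*$ as a candidate assignment in $G_\text{blue}$, the shortfall at each $u$ decomposes as $\text{mult}_\text{blue}(u) - b_u = \bigl(\text{mult}(u) - r_u - b_u\bigr) + \bigl(r_u - \text{mult}_\text{red}(u)\bigr)$. Weighting by $s(u)$ and summing yields $\text{def}(G_\text{blue}) \le \text{def}(G) + \sum_u s(u)\bigl(r_u - \text{mult}_\text{red}(u)\bigr)$.

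The main obstacle will be to bound the latter sum by $\sigma$, which I would handle by an Abel summation on the ``gap sequence'' $T_i := P_i - Q_i$. A case analysis of the recursion for $Q_i$ shows by induction that $T_i \in [0,1)$ for every $i$: if $Q_i = \lfloor P_i\rfloor$ then $T_i = \{P_i\} < 1$ directly, and if $Q_i = Q_{i-1} + \text{mult}(u_i)$ then the feasibility bound $r_{u_i} \le \text{mult}(u_i)$ forces $P_i - \text{mult}(u_i) \le P_{i-1}$, hence $T_i \le T_{i-1} < 1$. Abel summation then rewrites $\sum_i s(u_i)\bigl(r_{u_i} - \text{mult}_\text{red}(u_i)\bigr) = s(u_n)\,T_n + \sum_{i=1}^{n-1}\bigl(s(u_i) - s(u_{i+1})\bigr)\,T_i$, which telescopes to at most $s(u_{i_0})$, where $i_0$ is the smallest index with $r_{u_{i_0}} > 0$ (so that $T_i = 0$ for $i < i_0$). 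Since the hypothesis forces $s(u_{i_0}) \le \sigma$, the bound $\text{def}(G_\text{blue}) \le \text{def}(G) + \sigma$ follows.
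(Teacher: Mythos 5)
Your construction of the integer red multiplicities via prefix floors is sound, and in fact coincides with what the paper produces by iteratively shifting fractional mass from larger to smaller red nodes (the $\min$ with $Q_{i-1}+\text{mult}(u_i)$ is redundant: since $\text{mult}(u_i)\in\setZ$ and $r_{u_i}\le\text{mult}(u_i)$, one automatically has $\lfloor P_i\rfloor-\lfloor P_{i-1}\rfloor\le\text{mult}(u_i)$). The Hall-type verification of $\text{def}(G_\text{red})=0$ and the Abel-summation bound $\sum_u s(u)\,d_u\le\sigma$, with $d_u:=r_u-\text{mult}_\text{red}(u)$, are both correct.

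The gap is in the bound on $\text{def}(G_\text{blue})$. You write that ``the blue portion of $a^*$'' serves as a candidate assignment in $G_\text{blue}$ and deduce $\text{def}(G_\text{blue})\le\sum_u s(u)\bigl(\text{mult}_\text{blue}(u)-b_u\bigr)$. But this candidate is \emph{not} feasible in general: feasibility requires $b_u\le\text{mult}_\text{blue}(u)$, which fails exactly when $d_u<0$, since then $\text{mult}_\text{blue}(u)=\text{mult}(u)-\text{mult}_\text{red}(u)<\text{mult}(u)-r_u$ while $b_u$ may be as large as $\text{mult}(u)-r_u$. You explicitly acknowledge in your case analysis that $T_i-T_{i-1}$ can be negative, so this situation genuinely occurs. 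Restoring feasibility by capping the assignment at $\text{mult}_\text{blue}(u)$ replaces your expression by $\sum_u s(u)\max\bigl(0,\text{mult}_\text{blue}(u)-b_u\bigr)$, which exceeds it by $\sum_u s(u)\max(0,-d_u)$, and this extra term is not controlled by the Abel argument (the sequence $T_i$ can oscillate within $[0,1)$ many times, so its downward variation weighted by $s(\cdot)$ can be far larger than $\sigma$). The correct route --- and the one the paper takes --- is to compare $G_\text{blue}$ against the \emph{fractional} blue graph with left multiplicities $\text{mult}(u)-r_u$, whose deficiency is exactly $\text{def}(G)$. The prefix condition $T_i\in[0,1)$ together with $d_u=0$ for $s(u)>\sigma$ means the integer blue demand is obtained from the fractional one by shifting mass from larger left nodes to smaller ones (which cannot increase deficiency, since smaller left nodes have more admissible right targets) plus adding leftover mass of total amount $T_n<1$ supported on nodes of size at most $\sigma$; leaving this leftover unpacked costs at most $\sigma$.
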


\begin{proof} 
By allowing fractional red and blue multiplicities, we can find initial values for the red and blue multiplicities of left nodes so that $\text{def}(G_\text{red})=0$ and $\text{def}(G_\text{blue})=\text{def}(G)$.
To enforce integrality, we will update these multiplicities by swapping (fractional parts of) larger red nodes for smaller blue nodes.

Suppose nodes on the left with positive red multiplicity are ordered by size, so that $\sigma \ge s(v_1)\ge s(v_2)\ge ... \ge s(v_\ell)$.
While the multiplicities are not all integral, let $i$ be the index of the largest $v_i$ with $\text{mult}_\text{red}(v_i)$ not integral. 
If $i<\ell$, decrease $\text{mult}_{\text{red}}(v_i)$ to $\lfloor\text{mult}_{\text{red}}(v_i)\rfloor$, and increase $\text{mult}_\text{red}(v_{i+1})$ by the same amount. If $i=\ell$, simply decrease $\text{mult}_\text{red}(v_\ell)$ to $\lfloor\text{mult}_\text{red}(v_\ell)\rfloor.$
Notice that the deficiency of the red graph has not increased, since we are either replacing nodes with smaller nodes or decreasing the multiplicity of the last node.
Moreover, we notice that for any size $s$, the total red multiplicity of nodes at least size $s$ has decreased by at most $1$.
Therefore in the complementary blue graph, 
$$ \sum_{v \in V_{\ell}:s(v)\ge s}(\text{mult}'_\text{blue}(v)-\text{mult}_\text{blue}(v))\le 1.$$
The additional blue nodes we fail to pack will therefore all have size at most $\sigma$ and their total multiplicity will be at most $1$, so 
the deficiency of the blue graph increases by at most $\sigma$.
\end{proof}

A key technical ingredient for our algorithm is to be able to replace sets of identical 
copies of a container in patterns of $x$ by a bigger container that contains the union of
the smaller containers. 
\begin{lemma}\label{lem:Gluing}
Given a pair $(x,y)$ with $x \in \setR_{\geq 0}^{\pazocal{P}}$ and $y \in \setZ_{\geq 0}^{\pazocal{C}}$.
Let $k \in \setN$ and $0<\sigma \leq 1$ be two parameters.  
Let $\tilde{x} \in \setR_{\geq 0}^{\pazocal{P}}$ be the vector that emerges if for all containers $C$ with $\frac{1}{2}\sigma\le 
s(C)\le \sigma$ and all patterns $p$ 
we replace $k\cdot\lfloor \frac{p_C}{k}\rfloor$ copies of $C$ by $\lfloor\frac{p_C}{k}\rfloor$ copies of the container that is $k \cdot C$. Then there is a
$\tilde{y} \in \setZ_{\geq 0}^{\pazocal{C}}$ so that $\textrm{def}(\tilde{x},\tilde{y}) \leq \textrm{def}(x,y) + O(k\sigma)$.
\end{lemma}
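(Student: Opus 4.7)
The plan is to define $\tilde{y}$ by mirroring the right-side gluing on the left: for each container $C$ with $\frac{1}{2}\sigma \le s(C) \le \sigma$, I will replace $k \cdot b_C$ copies of $C$ in $y$ by $b_C$ copies of $kC$, for some integer $b_C \in [0, \lfloor y_C/k\rfloor]$ to be chosen. Crucially, the packing graph $G_1(b,\tilde{y})$ is unaffected by any such replacement, since one copy of container $kC$ provides exactly the same item-slots $(i,kC)$ with multiplicity $k\cdot C_i$ as the $k$ copies of $(i,C)$ it replaces; hence $\textrm{def}(G_1(b,\tilde{y})) = \textrm{def}(G_1(b,y))$ for any choice of $b_C$. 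All the work therefore lies in bounding the change to $G_2$.

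The key conceptual observation for $G_2$ is that if $b_C$ could equal the (generally fractional) quantity $a_C := \sum_p x_p \lfloor p_C/k \rfloor$, then both the left and right multiplicities of $G_2$ would change by identical amounts at every size threshold: subtracting $k a_C$ at size $s(C)$ and adding $a_C$ at size $k s(C)$ on both sides. A synchronized change of this form preserves deficiency exactly. The whole task is thus reduced to rounding each $a_C$ to an integer $b_C \le \lfloor y_C/k\rfloor$ while losing only $O(k\sigma)$ in total deficiency, and in particular without paying $O(k\sigma)$ per container type in the size range.

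To control integrality, I plan to invoke Lemma~\ref{decomposition} on $G_2(x,y)$ with its threshold set to $k\sigma$, coloring every right node of size greater than $k\sigma$ blue and every right node of size at most $k\sigma$ red. This yields an integer decomposition $y = y_R + y_B$ with $\textrm{def}(G_R(x,y_R)) = 0$ and $\textrm{def}(G_B(x,y_B)) \le \textrm{def}(G_2(x,y)) + k\sigma$. Because the gluing only touches right nodes in sizes $[\sigma/2,\sigma]$ and the new $kC$ nodes in $[k\sigma/2,k\sigma]$, which are all red, the blue subgraph is literally unchanged when $x$ is replaced by $\tilde{x}$. I therefore set $\tilde{y}_B := y_B$, which inherits the bound $\textrm{def}(G_B(\tilde{x},\tilde{y}_B)) \le \textrm{def}(G_2(x,y)) + k\sigma$ for free.

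It remains to define $\tilde{y}_R$, and this is where I expect the main technical difficulty. Since $\textrm{def}(G_R(x,y_R)) = 0$, the greedy assignment packs $y_R$ perfectly into the red right slots, and I would apply the natural left-side gluing $\tilde{y}_{R,C} := y_{R,C} \bmod k$ and $\tilde{y}_{R,kC} := y_{R,kC} + \lfloor y_{R,C}/k \rfloor$, then take $\tilde{y} := \tilde{y}_R + y_B$. The hard part will be verifying that this integer gluing inflates the red deficiency by only $O(k\sigma)$ rather than by $O(k\sigma)$ per size-class entry. I expect this to follow by an adaptation of Lemma~\ref{lem:DefIncrease} applied within the red subgraph: at every threshold $s > \sigma$ the right-side cumulative multiplicity only grows (the new $kC$ slots appear), while for $s \le \sigma$ the loss in right mass is matched by a corresponding loss in left mass from the left-side gluing, and any residual integer rounding slack between $a_C$ and $\lfloor y_{R,C}/k\rfloor$ is absorbed by the excess capacity at the new $kC$ slots, whose size is $O(k\sigma)$ each and which --- thanks to the original zero-deficiency of $G_R$ --- always have room left over.
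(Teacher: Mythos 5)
Your first two observations are fine: $G_1$ is unaffected by any such regrouping, and the ``ideal'' fractional choice $b_C=a_C$ would change left and right cumulative masses in lockstep. But the way you set up the red/blue decomposition and the per-type left gluing both differ from the paper in ways that break the argument, and the combination produces an actual counterexample.

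First, the coloring. You make \emph{every} right node of size at most $k\sigma$ fully red, with threshold $k\sigma$. The paper instead only makes the \emph{multiple-of-$k$ portion} $k\lfloor p_C/k\rfloor\cdot x_p$ of the size-class-$\sigma$ slots red, and everything else --- in particular the remainder $(p_C\bmod k)\cdot x_p$ of those very slots --- blue, with threshold $\sigma$. This balance is the whole point: the red subgraph then contains exactly the right-mass that will be glued, so after applying Lemma~\ref{decomposition} the red left-mass matches it and $\mathrm{def}(G_{\mathrm{red}})=0$ is ``tight.'' With your coloring the red graph can contain vastly more left mass than will ever be glue-able, and there is nothing balancing it.

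Second, the per-type left gluing $\tilde y_{R,C}:=y_{R,C}\bmod k$, $\tilde y_{R,kC}:=y_{R,kC}+\lfloor y_{R,C}/k\rfloor$ does not give an $O(k\sigma)$ bound. Concretely: one container type $C$ with $s(C)=\sigma$, $y_C=mk$, and $mk$ size-$\sigma$ right slots coming from patterns with $p_C=1$ each. Then $\mathrm{def}(G_2)=0$, all slots are red under your coloring, and $y_{R,C}=mk$. Since $p_C=1<k$, the gluing does nothing to $\tilde x$ and creates no $(kC,p)$ slot at all; but your $\tilde y_R$ has $m$ copies of $kC$ of size $k\sigma$ with nowhere to go, so the red deficiency jumps to $mk\sigma$. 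The ``excess capacity at the new $kC$ slots'' you appeal to in the last sentence simply may not exist. The error cannot be controlled type by type; it has to telescope across types. The paper achieves this by gluing the left side not type by type but by sorting all red left containers (with multiplicity) and forming super-containers from consecutive blocks of $k$; a cyclic shift argument --- send $C_{i+1}$ into the super-containers formed from the slots where the optimal red assignment packed $C_i$'s constituents --- then shows that only the single largest super-container plus at most $k$ trailing singletons fail to be packed, giving the $O(k\sigma)$ bound. That sorted consecutive grouping and shift is the key idea missing from your proposal.
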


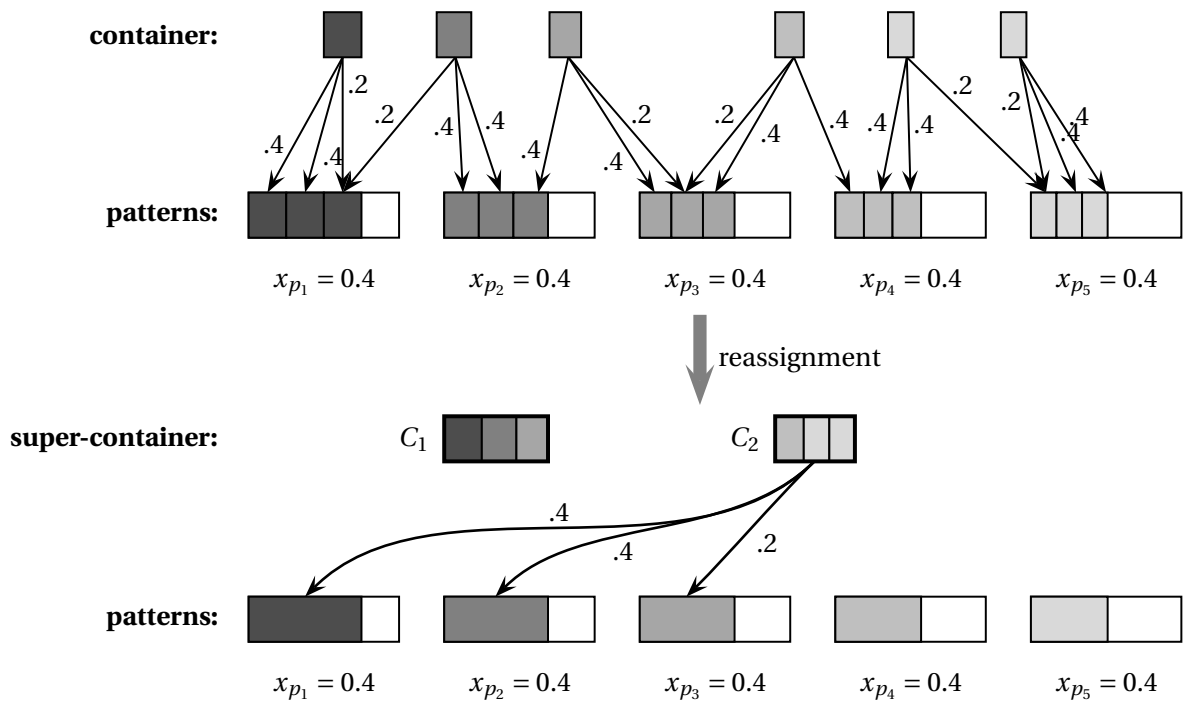
\begin{figure}
\begin{center}
\psset{xunit=2.0cm,yunit=0.6cm}
\begin{pspicture}(-1,-0.4)(8,5)
\rput[r](-0.2,4.5){{\bf container:}}
\rput[r](-0.2,0.5){{\bf patterns:}}
\rput[c](0.5,4){ 
\drawRect{linewidth=0.75pt,fillstyle=solid,fillcolor=black!70!white}{0}{0}{0.25}{1}
\pnode(0.125,0){c1}
}
\rput[c](1.25,4){ 
\drawRect{linewidth=0.75pt,fillstyle=solid,fillcolor=black!50!white}{0}{0}{0.23}{1}
\pnode(0.125,0){c2}
}
\rput[c](2.0,4){ 
\drawRect{linewidth=0.75pt,fillstyle=solid,fillcolor=black!35!white}{0}{0}{0.21}{1}
\pnode(0.125,0){c3}
}
\rput[c](3.5,4){ 
\drawRect{linewidth=0.75pt,fillstyle=solid,fillcolor=black!25!white}{0}{0}{0.19}{1}
\pnode(0.125,0){c4}
}
\rput[c](4.25,4){ 
\drawRect{linewidth=0.75pt,fillstyle=solid,fillcolor=black!15!white}{0}{0}{0.17}{1}
\pnode(0.125,0){c5}
}
\rput[c](5.0,4){ 
\drawRect{linewidth=0.75pt,fillstyle=solid,fillcolor=black!15!white}{0}{0}{0.17}{1}
\pnode(0.125,0){c6}
}
\rput[c](0,0){
\drawRect{linewidth=0.75pt}{0}{0}{1}{1}
\drawRect{linewidth=0.75pt,fillstyle=solid,fillcolor=black!70!white}{0.0}{0}{0.25}{1}
\drawRect{linewidth=0.75pt,fillstyle=solid,fillcolor=black!70!white}{0.25}{0}{0.25}{1}
\drawRect{linewidth=0.75pt,fillstyle=solid,fillcolor=black!70!white}{0.5}{0}{0.25}{1}
\pnode(0.125,1){p11}
\pnode(0.375,1){p12}
\pnode(0.625,1){p13}
\rput[c](0.5,-1){$x_{p_1} = 0.4$}
}
\rput[c](1.3,0){
\drawRect{linewidth=0.75pt}{0}{0}{1}{1}
\drawRect{linewidth=0.75pt,fillstyle=solid,fillcolor=black!50!white}{0.0}{0}{0.23}{1}
\drawRect{linewidth=0.75pt,fillstyle=solid,fillcolor=black!50!white}{0.23}{0}{0.23}{1}
\drawRect{linewidth=0.75pt,fillstyle=solid,fillcolor=black!50!white}{0.46}{0}{0.23}{1}
\pnode(0.125,1){p21}
\pnode(0.375,1){p22}
\pnode(0.625,1){p23}
\rput[c](0.5,-1){$x_{p_2} = 0.4$}
}
\rput[c](2.6,0){
\drawRect{linewidth=0.75pt}{0}{0}{1}{1}
\drawRect{linewidth=0.75pt,fillstyle=solid,fillcolor=black!35!white}{0.0}{0}{0.21}{1}
\drawRect{linewidth=0.75pt,fillstyle=solid,fillcolor=black!35!white}{0.21}{0}{0.21}{1}
\drawRect{linewidth=0.75pt,fillstyle=solid,fillcolor=black!35!white}{0.42}{0}{0.21}{1}
\pnode(0.1,1){p31}
\pnode(0.3,1){p32}
\pnode(0.5,1){p33}
\rput[c](0.5,-1){$x_{p_3} = 0.4$}
}
\rput[c](3.9,0){
\drawRect{linewidth=0.75pt}{0}{0}{1}{1}
\drawRect{linewidth=0.75pt,fillstyle=solid,fillcolor=black!25!white}{0.0}{0}{0.19}{1}
\drawRect{linewidth=0.75pt,fillstyle=solid,fillcolor=black!25!white}{0.19}{0}{0.19}{1}
\drawRect{linewidth=0.75pt,fillstyle=solid,fillcolor=black!25!white}{0.38}{0}{0.19}{1}
\pnode(0.1,1){p41}
\pnode(0.3,1){p42}
\pnode(0.5,1){p43}
\rput[c](0.5,-1){$x_{p_4} = 0.4$}
}
\rput[c](5.2,0){
\drawRect{linewidth=0.75pt}{0}{0}{1}{1}
\drawRect{linewidth=0.75pt,fillstyle=solid,fillcolor=black!15!white}{0.0}{0}{0.17}{1}
\drawRect{linewidth=0.75pt,fillstyle=solid,fillcolor=black!15!white}{0.17}{0}{0.17}{1}
\drawRect{linewidth=0.75pt,fillstyle=solid,fillcolor=black!15!white}{0.34}{0}{0.17}{1}
\pnode(0.1,1){p51}
\pnode(0.3,1){p52}
\pnode(0.5,1){p53}
\rput[c](0.5,-1){$x_{p_5} = 0.4$}
}
\ncline[linewidth=0.75pt]{->}{c1}{p11} \nbput[labelsep=1pt,npos=0.7]{$.4$}
\ncline[linewidth=0.75pt]{->}{c1}{p12} \naput[labelsep=1pt,npos=0.7]{$.4$}
\ncline[linewidth=0.75pt]{->}{c1}{p13} \naput[labelsep=1pt,npos=0.2]{$.2$}
\ncline[linewidth=0.75pt]{->}{c2}{p13} \nbput[labelsep=1pt,npos=0.5]{$.2$}
\ncline[linewidth=0.75pt]{->}{c2}{p21} \nbput[labelsep=1pt,npos=0.5]{$.4$}
\ncline[linewidth=0.75pt]{->}{c2}{p22} \naput[labelsep=1pt,npos=0.5]{$.4$}
\ncline[linewidth=0.75pt]{->}{c3}{p23} \nbput[labelsep=1pt,npos=0.7]{$.4$}
\ncline[linewidth=0.75pt]{->}{c3}{p31} \nbput[labelsep=1pt,npos=0.7]{$.4$}
\ncline[linewidth=0.75pt]{->}{c3}{p32} \naput[labelsep=1pt,npos=0.5]{$.2$}
\ncline[linewidth=0.75pt]{->}{c4}{p32} \nbput[labelsep=1pt,npos=0.5]{$.2$}
\ncline[linewidth=0.75pt]{->}{c4}{p33} \naput[labelsep=1pt,npos=0.5]{$.4$}
\ncline[linewidth=0.75pt]{->}{c4}{p41} \naput[labelsep=1pt,npos=0.5]{$.4$}
\ncline[linewidth=0.75pt]{->}{c5}{p42} \nbput[labelsep=1pt,npos=0.5]{$.4$}
\ncline[linewidth=0.75pt]{->}{c5}{p43} \naput[labelsep=1pt,npos=0.5]{$.4$}
\ncline[linewidth=0.75pt]{->}{c5}{p51} \naput[labelsep=1pt,npos=0.3]{$.2$}
\ncline[linewidth=0.75pt]{->}{c6}{p51} \nbput[labelsep=1pt,npos=0.3]{$.2$}
\ncline[linewidth=0.75pt]{->}{c6}{p52} \naput[labelsep=1pt,npos=0.6]{$.4$}
\ncline[linewidth=0.75pt]{->}{c6}{p53} \naput[labelsep=1pt,npos=0.5]{$.4$}
\end{pspicture}
\begin{pspicture}(-1,-0.8)(8,8.5)
\pnode(3,7.25){A} \pnode(3,5.25){B} \ncline[linecolor=gray,linewidth=5pt,arrowsize=11pt]{->}{A}{B} \naput{reassignment}
\rput[r](-0.2,4.5){{\bf super-container:}}
\rput[r](-0.2,0.5){{\bf patterns:}}
\rput[c](1.3,4){ 
\drawRect{linewidth=0.75pt,fillstyle=solid,fillcolor=black!70!white}{0}{0}{0.25}{1}
\drawRect{linewidth=0.75pt,fillstyle=solid,fillcolor=black!50!white}{0.25}{0}{0.23}{1}
\drawRect{linewidth=0.75pt,fillstyle=solid,fillcolor=black!35!white}{0.48}{0}{0.21}{1}
\drawRect{linewidth=1.5pt,fillstyle=none}{0}{0}{0.69}{1}
\pnode(0.34,0){c1}
\rput[c](-0.2,0.5){$C_1$}
}
\rput[c](3.5,4){ 
\drawRect{linewidth=0.75pt,fillstyle=solid,fillcolor=black!25!white}{0}{0}{0.19}{1}
\drawRect{linewidth=0.75pt,fillstyle=solid,fillcolor=black!15!white}{0.19}{0}{0.17}{1}
\drawRect{linewidth=0.75pt,fillstyle=solid,fillcolor=black!15!white}{0.36}{0}{0.17}{1}
\drawRect{linewidth=1.5pt,fillstyle=none}{0}{0}{0.53}{1}
\pnode(0.26,0){c2}
\rput[c](-0.2,0.5){$C_2$}
}
\rput[c](0,0){
\drawRect{linewidth=0.75pt}{0}{0}{1}{1}
\drawRect{linewidth=0.75pt,fillstyle=solid,fillcolor=black!70!white}{0.0}{0}{0.75}{1}
\pnode(0.375,1){p1}
\rput[c](0.5,-1){$x_{p_1} = 0.4$}
}
\rput[c](1.3,0){
\drawRect{linewidth=0.75pt}{0}{0}{1}{1}
\drawRect{linewidth=0.75pt,fillstyle=solid,fillcolor=black!50!white}{0.0}{0}{0.69}{1}
\pnode(0.34,1){p2}
\rput[c](0.5,-1){$x_{p_2} = 0.4$}
}
\rput[c](2.6,0){
\drawRect{linewidth=0.75pt}{0}{0}{1}{1}
\drawRect{linewidth=0.75pt,fillstyle=solid,fillcolor=black!35!white}{0.0}{0}{0.63}{1}
\pnode(0.31,1){p3}
\rput[c](0.5,-1){$x_{p_3} = 0.4$}
}
\rput[c](3.9,0){
\drawRect{linewidth=0.75pt}{0}{0}{1}{1}
\drawRect{linewidth=0.75pt,fillstyle=solid,fillcolor=black!25!white}{0.0}{0}{0.57}{1}
\rput[c](0.5,-1){$x_{p_4} = 0.4$}
}
\rput[c](5.2,0){
\drawRect{linewidth=0.75pt}{0}{0}{1}{1}
\drawRect{linewidth=0.75pt,fillstyle=solid,fillcolor=black!15!white}{0.0}{0}{0.51}{1}
\rput[c](0.5,-1){$x_{p_5} = 0.4$}
}
\nccurve[angleA=-135,angleB=45,linewidth=1pt]{->}{c2}{p1} \nbput{$.4$}
\nccurve[angleA=-135,angleB=45,linewidth=1pt]{->}{c2}{p2} \naput[npos=0.6,labelsep=0pt]{$.4$}
\nccurve[angleA=-135,angleB=45,linewidth=1pt]{->}{c2}{p3} \naput{$.2$}
\end{pspicture}
\caption{Visualization of the reassignment in Lemma~\ref{lem:Gluing} for $k=3$. The upper packing graph is the red part of $G_2(x,y)$ with the optimal assignment $a$, assuming that each container has multiplicity $1$. 
The lower graph gives the red part of $G_2(\tilde{x},\tilde{y})$ with the constructed assignment $a$ that we give in the analysis. Darker colors indicate larger containers.}
\end{center}
\end{figure}

\begin{proof}
Consider the graph $G_2(x,y)$ as in section \ref{sec:Deficiency}. For every right node $(C,p)$, we assign $\text{mult}_\text{red}(C,p)= k\cdot \lfloor \frac{p_C}{k} \rfloor \cdot x_p$ for $C$ in size class $\sigma$, and $\text{mult}_\text{red}(C,p)=0$ for all other $C$.
We set $\text{mult}_\text{blue}(C,p)=\text{mult}(C,p)-\text{mult}_\text{red}(C,p)$.
By Lemma \ref{decomposition}, we can find integral red and blue multiplicities of left nodes so that $\text{def}(G_\text{red})=0$ and $\text{def}(G_\text{blue})\le \text{def}(x,y)+\sigma$. 
The red and blue graphs can now be treated separately, and so we restrict our attention to the red graph since it represents precisely the containers that we want to reassign.

For all nodes $(C,p)$ on the right of the red graph, we combine the copies of $C$ in pattern $p$ into containers of type $k\cdot C$.
For clarity we refer to these larger containers as super-containers. 
Similarly, we look at the containers of the left nodes, ordered from largest to smallest and taken with multiplicity.
In consecutive sets of cardinality $k$, we combine the containers into super-containers, except perhaps fewer than $k$ of the smallest ones. 
Write $C_i$ to represent the $i$th largest super-container on the left.

We claim that all super-containers except $C_1$ can be packed into the right nodes.
To see how to pack them, let $a$ be an optimal assignment in the original red graph.
For all $i$, $a$ assigned the containers making up $C_i$ to some combination of large-enough containers of total multiplicity $k$.
All such containers became part of super-containers in the new graph, and the total multiplicity of their contribution to these super-containers is exactly $1$.
These super-containers are not necessarily all large enough to fit $C_i$, but they are all large enough to fit $C_{i+1}$, and this is exactly where we send $C_{i+1}$.
With this assignment, at most one super-container and $k$ containers were left unpacked, and so the deficiency of the updated red graph is at most $2k\sigma$. 

For all containers $C$, we let $\tilde{y}_C=\text{mult}_\text{red}(C)+\text{mult}_\text{blue}(C)$. We note that we only changed $y$ by rearranging the containers, and in particular we did not change the item multiplicities. Therefore we know that 
$\text{def}(G_1(\tilde{y}))=\text{def}(G_1(y))$.
With this definition of $\tilde{y}$, we note that $G_{\text{red}}+G_{\text{blue}}$ is precisely $G_2(\tilde{x},\tilde{y}).$
We therefore have $\text{def}(G_2(\tilde{x},\tilde{y}))\le \text{def}(G_2(x,y))+O(k\sigma)$, and so
the total increase in deficiency is at most $O(k\sigma)$.
\end{proof}
We are now ready to give our second main lemma of this section.

\begin{lemma} [Reassigning containers]
Suppose $x\in \setR_{\ge 0}^{\pazocal{P}}, y\in \setZ_{\ge 0}^{\pazocal{C}}$, and $\sigma<2^{-4}$.
Then we can combine containers in size class $\sigma$ in $x$ and $y$ into larger containers, yielding new solutions $\tilde{x}, \tilde{y}$ satisfying the following conditions.
\begin{enumerate}
\item $\mathbf{1}^T\tilde{x}=\mathbf{1}^Tx$.
\item $|\text{supp}(\tilde{x})|\le |\text{supp}(x)|$.
\item For all patterns $p\in \text{supp}(\tilde{x})$ and containers $C$ in size class $\sigma$, $p_C\le (\frac{1}{\sigma})^{1/4}$.
\item Multiplicities of small containers in patterns in supp$(x)$ are not affected.
\item $\text{def}(\tilde{x},\tilde{y})\leq \text{def}(x,y)+O(\sigma^{3/4})$.
\end{enumerate}
\end{lemma}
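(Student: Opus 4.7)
The plan is to apply Lemma~\ref{lem:Gluing} a single time with the carefully chosen parameter $k := \lfloor (1/\sigma)^{1/4}\rfloor + 1$. The intuition is that gluing together groups of $k$ identical copies of a container within each pattern leaves behind at most $k-1$ leftover copies per pattern, which is exactly the threshold demanded by condition (3). So the entire lemma reduces to picking the correct $k$ and reading off the consequences.

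Concretely, for each container $C$ with $s(C)\in[\sigma/2,\sigma]$ and each pattern $p\in \textrm{supp}(x)$, Lemma~\ref{lem:Gluing} replaces $k\lfloor p_C/k\rfloor$ copies of $C$ in $p$ with $\lfloor p_C/k\rfloor$ copies of the super-container $k\cdot C$, and provides a companion $\tilde{y}\in\setZ_{\geq 0}^{\pazocal{C}}$ with $\textrm{def}(\tilde{x},\tilde{y})\le \textrm{def}(x,y)+O(k\sigma)$. From this single application I would verify the five conditions in turn: (1) holds because gluing never changes the pattern multiplicities $x_p$ themselves, only the contents of the patterns; (2) holds because no new patterns are created, merely rewritten; (4) holds because Lemma~\ref{lem:Gluing} only touches containers in size class $\sigma$, so strictly smaller containers are left untouched; (3) holds because the number of surviving copies of $C$ in $p$ equals $p_C\bmod k<k$, which is at most $\lfloor(1/\sigma)^{1/4}\rfloor\le(1/\sigma)^{1/4}$ by the choice of $k$; and (5) follows from $k\le (1/\sigma)^{1/4}+1$, giving a deficiency increase of $O(k\sigma)=O(\sigma^{3/4})$.

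There is essentially no hard step, since the genuine technical work was already done in Lemmas~\ref{decomposition} and~\ref{lem:Gluing}. The only delicate point is balancing two competing requirements on $k$: it must be large enough that the leftover $p_C\bmod k$ drops below $(1/\sigma)^{1/4}$, but small enough that the deficiency cost $O(k\sigma)$ stays within $O(\sigma^{3/4})$. These two demands meet precisely at $k\approx\sigma^{-1/4}$, which is why the exponent in condition (3) is $1/4$ and the exponent in condition (5) is $3/4$. The hypothesis $\sigma<2^{-4}$ is used only to ensure $k\ge 2$ so that gluing actually triggers, and simultaneously that every super-container $k\cdot C$ has size at most $k\sigma\le \sigma^{3/4}<1$ and so remains a valid element of $\pazocal{C}$.
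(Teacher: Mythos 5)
This is essentially the paper's proof: a single application of Lemma~\ref{lem:Gluing} with $k\approx(1/\sigma)^{1/4}$, followed by a direct check of the five conditions. The paper takes $k=\lfloor(1/\sigma)^{1/4}\rfloor$ and uses $\sigma<2^{-4}$ only to guarantee $k\geq 2$; your choice $k=\lfloor(1/\sigma)^{1/4}\rfloor+1$ also works, though note that $k\sigma\leq\sigma^{3/4}+\sigma$ rather than $\leq\sigma^{3/4}$ as you wrote --- still $O(\sigma^{3/4})$, so condition (5) is unaffected.
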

\begin{proof}
We apply  Lemma~\ref{lem:Gluing} with parameter
 $k=\lfloor(\frac{1}{\sigma})^{1/4}\rfloor$ and obtain a pair $(\tilde{x},\tilde{y})$  so that $\text{def}(\tilde{x},\tilde{y})\le O(k\sigma)\le O(\sigma^{3/4})$, 
and so condition $(5)$ is satisfied. Since we have updated $x$ by altering the patterns in its support, conditions $(1)$ and $(2)$ are also satisfied. In the process of Lemma~\ref{lem:Gluing}, we decreased $p_C$ for $C$ in size class $\sigma$ to at most $k$. Since $\sigma<2^{-4}$, we know that $k\ge 2$, and so the containers we created are in strictly larger size classes. Therefore conditions $(3)$ and $(4)$ are satisfied. 
\end{proof}

Let us say briefly what the container reassignment does to the associated matrix $A$. If $\tilde{A}_C$ is any row of the updated matrix corresponding to a container in size class $\sigma$, we know $\tilde{A}_C$ is entrywise less than or equal to $A_C$ and $\|\tilde{A}_C\|_{\infty}\le (\frac{1}{\sigma})^{1/4}$. In all rows corresponding to smaller size classes, $\tilde{A}_C=A_C$.

Before we talk about applying Lovett-Meka, we want to summarize the results of our grouping and container reassignment.
We summarize the procedure:
\begin{enumerate}
\item[(1)] For size classes $s_{\min} \le \sigma \le 2^{-72}$, starting with the smallest, do:
  \begin{enumerate}
  \item[(2)] Group the containers in size class $\sigma$ with $\delta=\sqrt{\sigma}$. 
  \item[(3)] Whenever we find more than $(\frac{1}{\sigma})^{1/4}$ copies of the same container in one pattern, we put them together in a larger container.
  \end{enumerate}
\item[(4)] For $\sigma>2^{-72}$, group the containers in size class $\sigma$ with $\delta=64$.
\end{enumerate}
In the following we will call a size class $\sigma$ \emph{small} if $\sigma\le 2^{-72}$ and \emph{large} otherwise.
First note that the increase in deficiency of the entire procedure is at most\\
 \[
\sum_{\sigma\in2^{-\setN}}(O(\sigma^{1/2})+O(\sigma^{3/4}))+72 \cdot 64 =O(1).
\]

Let $A$ be the matrix we obtain at the end of this procedure. 
In addition, we would like to keep much of the group structure that was created during the procedure.
Define the \emph{shadow incidence matrix} $\tilde{A}$ to be the matrix that agrees with $A$ on large size classes, but for small size classes represents the incidences \emph{after step $(2)$}, but \emph{before step $(3)$}. 
We can imagine that whenever a container is put into a larger container, 
its incidence entry \emph{remains} in $\tilde{A}$. In particular a container might be put into containers
iteratively and hence it may contribute to several incidences in $\tilde{A}$ but only one in $A$. 
Note that $\tilde{A}$ is entrywise at least as large as $A$. 


For all containers $C\in \pazocal{C}$, let $A_C$ denote the row of $A$ corresponding to $C$, and $\tilde{A}_C$ the corresponding row of $\tilde{A}$. Recall that $A$ and $\tilde{A}$ contain columns for patterns in $\textrm{frac}(x)$.
Now, let us summarize the properties that
the container-forming procedure provides:
\begin{enumerate}
\item[(A)] For a container $C$ in size class $\sigma$ one has $\|\tilde{A}_C\|_1 \geq (\frac{1}{\sigma})^{1/2}$ if $\sigma$ is small, and $\|\tilde{A}_C\|_1=\|A_C\|_1 \geq 64$ if $\sigma$ is large.
\item[(B)] For a container $C$ in a small size class $\sigma$, and column $j=1,...,m,$ one has $A_{Cj} \leq (\frac{1}{\sigma})^{1/4}$.
\item[(C)] One has
\[
  \sum_{i=1}^s \|\tilde{A}_{C_i}\|_1 \cdot s(C_i)^{17/16} \leq 24\sum_{i=1}^s \|A_{C_i}\|_1 \cdot s(C_i).
\] 
\end{enumerate}
Here (A) follows from the fact that after step (2), we have $(\frac{1}{\sigma})^{1/2}$ incidences for each container. 
(B) follows since after step (3), there are at most $(\frac{1}{\sigma})^{1/4}$ containers of each type in a pattern. 
The condition in (C) can be understood as follows: if we have a container of size $s(C)$, then the containers in it may
appear many times in $\tilde{A}$ but only in smaller size classes. By discounting smaller incidences, we can upper-bound the contribution of 
the shadow incidences by the contribution of the actual containers. 

To make this more concrete, consider a container $C$ appearing in $A$ in some size class. If this container came from $k$ smaller containers, then those smaller containers are size at most $2\cdot \frac{s(C)}{k}$. Here the factor $2$ comes from the fact that during grouping our container could have been rounded down by a factor of $2$.
Therefore the contribution of the shadow incidences of these smaller containers to the left hand side is $(\frac{2s(C)}{k})^{17/16}\cdot k=s(C)^{17/16}\cdot 2^{17/16}k^{-1/16}.$ But we chose the parameters so that whenever we combine $k$ containers we have $k\ge 2^{18}$ and so the contribution is at most $2^{-1/16}\cdot s(C)^{17/16}$. The shadow incidences $\ell$ levels down similarly contribute $(2^{-1/16})^\ell\cdot s(C)^{17/16}$.
Then the total contribution of the shadows of $C$ to the left hand side of property (C) is at most
\[
  \sum_{\ell \geq 0} (2^{-1/16})^\ell s(C)^{17/16}\leq 24 \cdot s(C)^{17/16} \leq 24\cdot s(C).
\]

\section{Applying the Lovett-Meka algorithm\label{sec:ApplyingLM}}
Using the grouping and container reassignment above, we can replace $y$ with $\tilde{y}$ and $x$ with $\tilde{x}$ so that the incidence matrix $A$ and shadow matrix $\tilde{A}$ satisfy properties $(A)-(C)$. We now want to create intervals of the rows of $A$ and $\tilde{A}$ in a nice way so that we can apply Lovett-Meka and make $x$ more integral.
Formally, we will argue the following:
\begin{claim}
Suppose $x\in [0,1[^\pazocal{P},y\in \setZ^\pazocal{C}_{\geq 0}$, $A$ is the incidence matrix of $x$, and $\tilde{A}$ is a matrix so that $A$ and $\tilde{A}$ satisfy conditions $(A)+(B)+(C)$. Then there is a 
randomized polynomial time algorithm to find a vector $\tilde{x}$ satisfying
\begin{itemize*}
\item $\bm{1}^T\tilde{x} = \bm{1}^Tx$
\item $\textrm{def}(\tilde{x},y) \leq \textrm{def}(x,y) + O(1)$
\item $|\textrm{frac}(\tilde{x})| \leq \frac{1}{2} |\textrm{frac}(x)|$
\end{itemize*}
\end{claim}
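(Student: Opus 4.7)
The plan is to invoke Lovett--Meka on the restriction of $x$ to its fractional support, living in $[0,1]^m$ with $m = |\textrm{frac}(x)|$. That algorithm automatically produces $\tilde{x}$ in which at least $m/2$ of those coordinates have become integral, so $|\textrm{frac}(\tilde{x})| \le \frac{1}{2}|\textrm{frac}(x)|$ comes for free. The other two requirements must be enforced by a careful choice of the test vectors $v_j$ and parameters $\lambda_j$. Preserving $\bm{1}^T\tilde{x} = \bm{1}^Tx$ is easy: include $v_0 = \bm{1}$ with $\lambda_0 = 0$. The bulk of the work lies in the deficiency bound.

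To control $\textrm{def}(\tilde{x}, y)$, I would appeal to Lemma~\ref{lem:DefIncrease}: for every threshold $s$ it suffices to upper bound the shortage $\sum_{s(C)\ge s}\bigl(\textrm{mult}(C,x) - \textrm{mult}(C,\tilde{x})\bigr)$ in such a way that, weighted by $s$, the bounds sum to $O(1)$ across size classes. Ordering the rows of the shadow matrix $\tilde{A}$ from largest container to smallest, I would build test vectors via a dyadic decomposition: for every size class $\sigma$, every level $\ell$, and every interval $I$ of $2^\ell$ consecutive rows in that class, include the vector $v_I = \sum_{C\in I}\tilde{A}_C$. Every prefix of rows within a size class then decomposes into $O(\log m)$ such intervals, so if $|\langle v_I, \tilde{x}-x\rangle| \le \lambda_I\|v_I\|_2$ for each one, all shortages are controlled simultaneously.

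The $\ell_2$ estimates are where the three structural properties come in. In a small size class $\sigma$, property (B) yields $\|v_I\|_\infty \le (\frac{1}{\sigma})^{1/4}|I|$ and, more usefully, $\|v_I\|_2^2 \le (\frac{1}{\sigma})^{1/4}\sum_{C\in I}\|\tilde{A}_C\|_1$; property (A) lower bounds $\|\tilde{A}_C\|_1$ and so controls the number of distinct rows per size class; and property (C) globally caps $\sum_C \|\tilde{A}_C\|_1 s(C)^{17/16}$ by $O(m)$, using $\sum_C \|A_C\|_1 s(C) \le \bm{1}^Tx \le m$. I would then pick $\lambda_I$ of order $\sqrt{\log(m/w(I))}$, where $w(I)$ is a ``budget'' derived from property (C) distributed across intervals, so that the resulting deficiency contribution $\sigma\cdot\lambda_I\|v_I\|_2$ per interval telescopes to $O(\sigma^{\alpha})$ for some $\alpha>0$ within each size class, and hence $O(1)$ overall, while $\sum_I e^{-\lambda_I^2/16} \le m/16$ remains valid.

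The main obstacle is making these two competing constraints meet. The exponent $17/16$ in property (C) is essentially tight: it has to be big enough to absorb the $(\frac{1}{\sigma})^{1/4}$ entrywise bound from property (B) and still leave a positive power of $\sigma$ to make the per-class deficiency contribution summable, yet it must be small enough that the Lovett--Meka exponential condition still admits feasible $\lambda_I$ after we account for every dyadic level of every size class. The hypothesis $|\textrm{supp}(x)| \ge L\log(\frac{1}{s_{\min}})$ from Theorem~\ref{thm:OneIteration} is precisely what supplies the slack on the right-hand side $m/16$, accommodating the $O(\log(\frac{1}{s_{\min}}))$ small size classes times $O(\log m)$ dyadic levels per class plus the $O(1)$ large classes.
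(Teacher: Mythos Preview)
Your overall architecture is right---Lovett--Meka with $\bm{1}$ as a tight constraint, interval-summed row vectors as the remaining constraints, and Lemma~\ref{lem:DefIncrease} to convert prefix-sum errors into a deficiency bound---but two concrete choices break the argument.

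First, the test vectors must come from $A$, not $\tilde{A}$. Deficiency is governed by $\textrm{mult}(C,\tilde{x}) = A_C \tilde{x}$, so what Lovett--Meka needs to bound is $\sum_{j\le i} A_{C_j}(x-\tilde{x})$. Controlling $\sum_{j\le i} \tilde{A}_{C_j}(x-\tilde{x})$ does not do this, since $x-\tilde{x}$ has both signs and $\tilde{A}\ge A$ only entrywise. Worse, property~(B) is stated for $A$ and is simply false for $\tilde{A}$: the shadow matrix records incidences \emph{before} the reassignment step that enforces the $(\frac{1}{\sigma})^{1/4}$ cap, so $\|\tilde{A}_C\|_{\infty}$ can be as large as $\Theta(\frac{1}{\sigma})$, and your $\ell_2$ estimate collapses. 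The paper takes $v_I = \sum_{i\in I} A_{C_i}$ and uses $\tilde{A}$ only indirectly, via $\|A_{C_i}\|_1 \le \|\tilde{A}_{C_i}\|_1$, to obtain $\|v_I\|_2 \le \tilde{n}(I)\,\sigma^{1/8}$.

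Second, the interval structure is not dyadic in row count but dyadic in \emph{shadow weight} $\tilde{n}(I) = \sum_{i\in I}\|\tilde{A}_{C_i}\|_1$. Level-$0$ intervals in a small class $\sigma$ satisfy $\tilde{n}(I) \le K(\frac{1}{\sigma})^{17/16}$; each is recursively split into at most three children with half that bound, and one sets $\lambda_I = \ell$ for an interval at level~$\ell$. This is what makes both sides close: the count $|\pazocal{I}_{\sigma,\ell}| \le 3^\ell\bigl(\frac{4}{K}\sigma^{17/16}\tilde{n}_\sigma + 1\bigr)$ together with property~(C) keeps $\sum_I e^{-\lambda_I^2/16}$ below $\frac{m}{16}$, while the weight bound gives $\|v_I\|_2 \le K\,2^{-\ell}(\frac{1}{\sigma})^{15/16}$, so a prefix ending in class $\sigma$ decomposes into at most three intervals per level with error $\sum_{\ell\ge 0} 3\ell\cdot K\,2^{-\ell}(\frac{1}{\sigma})^{15/16} = O\bigl((\frac{1}{\sigma})^{15/16}\bigr)$, yielding a deficiency increase of $O(\sigma^{1/16})$ per class. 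Your row-count dyadic scheme gives no control on $\tilde{n}(I)$ and hence none on $\|v_I\|_2$, and the choice $\lambda_I \sim \sqrt{\log(m/w(I))}$ is not concrete enough to verify either the Lovett--Meka budget or the error sum.
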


Suppose the containers appearing in the patterns in supp$(x)$ are $C_1,...,C_s$, ordered from largest to smallest.
As we fix the fractional solution $x$ for now, let us denote $n(C_i) := \sum_{p \in \textrm{frac}(x)} p_{C_i} = \|A_{C_i}\|_1$
as the number of incidences of container $C_i$ in $A$. 
Similarly, let $\tilde{n}(i) = \|\tilde{A}_{C_i}\|_1$ be the number of incidences in the shadow matrix $\tilde{A}$. Again, we have $n(i) \leq \tilde{n}(i)$ for all $i$. 
Finally, let us denote $\tilde{n}_{\sigma} := \sum_{i\textrm{ in class }\sigma}\tilde{n}(i)$ as the total number of shadow incidences that occur for size class $\sigma$. 

For a fixed constant $K>0$, and for each small size class $\sigma$, we first create level $0$ intervals of the rows as follows. For any row $i$ satisfying $\tilde{n}(i)>\frac{1}{2}K(\frac{1}{\sigma})^{17/16}$, we let $\{i\}$ be its own interval. We then subdivide the remaining rows into intervals so that $\tilde{n}(I)\le K(\frac{1}{\sigma})^{17/16}$ for each interval $I$. We need a total of at most $\frac{4}{K}\sigma^{17/16}\tilde{n}_\sigma+1$ intervals on level $0$.

Now, given an interval $I$ on level $\ell$ with $|I|>1$, we will subdivide $I$ into at most $3$ intervals on level $\ell+1$.
First, for any row $i \in I$ with $\tilde{n}(i)>(\frac{1}{2})^{\ell+1}K({\frac{1}{\sigma}})^{17/16}$, let $\{i\}$ be its own interval. We then subdivide the remaining rows into intervals so that $\tilde{n}(I)\le (\frac{1}{2})^{\ell}K(\frac{1}{\sigma})^{17/16}$. Since none of the rows $i\in I$ became its own interval on level $\ell$, we also know that $\tilde{n}(i)\le (\frac{1}{2})^{\ell}K(\frac{1}{\sigma})^{17/16}$, and so in fact this bound holds for every interval on level $\ell+1$. 
The number of intervals on level $\ell$ is at most $3^\ell\cdot (\frac{4}{K}\sigma^{17/16}\tilde{n}_\sigma+1)$.

For large size classes $\sigma$, 
create an interval for each row $\{i\}$. Due to the grouping procedure, the size of each interval is at least $64$. All such intervals are level zero, and we do not create any higher levels.

Let us abbreviate all intervals on level $\ell$ for size class $\sigma$ as $\pazocal{I}_{\sigma,\ell}$. 
We denote $\pazocal{I}_{\sigma} := \bigcup_{\ell \geq 0} \pazocal{I}_{\sigma,\ell}$ as the whole family for size class $\sigma$
and $\pazocal{I} := \bigcup_{\sigma} \pazocal{I}_{\sigma}$ 
as the union over all size classes. 

For an interval $I$, we define the vector
\[
  v_I := \sum_{i \in I} A_i
\]
as the sum of the corresponding rows in the incidence matrix.

For an interval $I \in \pazocal{I}_{\sigma,\ell}$, we define $\lambda_I := \ell$ (that means the parameter just denotes the level
on which it lives).
The input for the Lovett-Meka algorithm will consist of the pairs $\{ (v_I,\lambda_I)\}_{I \in \pazocal{I}}$ where we use $\lambda_I \geq 0$ as the parameter for a constraint 
with normal vector $v_I$. Additionally, we add a single vector $v_{\textrm{obj}} := \bm{1}$ with parameter $\lambda_{\textrm{obj}} := 0$ to
control the objective function. 
There are two things to show. First we argue that the parameters are chosen so that the condition of the Lovett-Meka algorithm
is actually satisfied:
\begin{lemma}
Suppose that $|\text{supp}(x)|\ge L\log(\frac{1}{s_{\min}})$. For $K,L$ large enough constants, one has 
\[
 \sum_{I \in \pazocal{I}} e^{-\lambda_I^2/16} +1 \le \frac{1}{16} \cdot |\textrm{supp}(x)|
\]
\end{lemma}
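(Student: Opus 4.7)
The plan is to split the sum $\sum_I e^{-\lambda_I^2/16}$ according to whether the size class $\sigma$ is small or large, and bound each piece separately using the properties (A)--(C) together with the two free parameters $K$ and $L$.

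\textbf{Large size classes.} Here every interval lives on level $0$, so $\lambda_I=0$ and each contributes $1$ to the sum. I would bound the number of such intervals by the number of rows in large classes. By property (A), every such container satisfies $\|A_C\|_1 \cdot s(C) \ge 64$. On the other hand, the total container volume used by $x$ is
\[
\sum_C \|A_C\|_1\, s(C) \;=\; \sum_{p\in\textrm{supp}(x)} x_p \sum_C p_C s(C) \;\le\; \sum_p x_p \;\le\; |\textrm{supp}(x)|,
\]
since $x\in[0,1[^\pazocal{P}$ and each pattern has total container size at most $1$. Dividing by $64$ gives at most $|\textrm{supp}(x)|/64$ large-class intervals.

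\textbf{Small size classes.} Using the recursive bound $|\pazocal{I}_{\sigma,\ell}|\le 3^\ell\bigl(\tfrac{4}{K}\sigma^{17/16}\tilde{n}_\sigma+1\bigr)$ and $\lambda_I=\ell$, the contribution is at most
\[
\Bigl(\sum_{\ell\ge 0} 3^\ell e^{-\ell^2/16}\Bigr)\sum_{\sigma \text{ small}}\Bigl(\tfrac{4}{K}\sigma^{17/16}\tilde{n}_\sigma + 1\Bigr).
\]
The inner geometric-times-Gaussian sum is an absolute constant $S$ (for small $\ell$ it grows like $3^\ell$ but for $\ell \gtrsim 17$ the Gaussian factor dominates, producing a convergent tail). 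The $\tilde{n}_\sigma$ part I would rewrite by replacing $\sigma^{17/16}$ with $s(C_i)^{17/16}$ at the cost of a factor $2^{17/16}$ (since $s(C_i)\ge \sigma/2$ in its class) and then applying property (C):
\[
\sum_{\sigma}\sigma^{17/16}\tilde{n}_\sigma \;\le\; 2^{17/16}\sum_i \tilde{n}(i)\,s(C_i)^{17/16} \;\le\; 24\cdot 2^{17/16}\sum_i n(i)\,s(C_i) \;\le\; 24\cdot 2^{17/16}|\textrm{supp}(x)|.
\]
Choosing $K$ large enough makes the prefactor $\tfrac{4S\cdot 24\cdot 2^{17/16}}{K}$ at most $\tfrac{1}{64}$, so the main term is bounded by $|\textrm{supp}(x)|/64$.

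\textbf{Handling the ``+1'' and the objective vector.} The remaining contribution is $S\cdot(\text{number of small size classes}) + 1 \le S\log(1/s_{\min})+1$ (the $+1$ accounts for the extra vector $v_{\textrm{obj}}$). The hypothesis $|\textrm{supp}(x)|\ge L\log(1/s_{\min})$ with $L$ chosen large enough relative to $S$ absorbs this into $|\textrm{supp}(x)|/64$. Summing the three pieces gives $3|\textrm{supp}(x)|/64 \le |\textrm{supp}(x)|/16$, as required.

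The only genuine subtlety is the passage from $\tilde{n}_\sigma$ weighted by $\sigma^{17/16}$ to the volume $\sum_i n(i)s(C_i)$: this is exactly the reason property (C) was set up with the exponent $17/16>1$, so the geometric factor $2^{-1/16}$ per level of gluing makes the shadow-incidence sum telescope. Everything else is a choice of the constants $K$ and $L$ after identifying the dominant terms.
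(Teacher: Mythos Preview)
Your proposal is correct and follows essentially the same route as the paper: split into large and small size classes, use the volume bound for the large ones, use the recursive interval count together with property~(C) for the small ones, and absorb the $O(\log(1/s_{\min}))$ leftover terms via the hypothesis on $|\textrm{supp}(x)|$.

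Two small points worth cleaning up. First, the displayed identity
\[
\sum_C \|A_C\|_1\, s(C) \;=\; \sum_{p\in\textrm{supp}(x)} x_p \sum_C p_C\, s(C)
\]
is not right: $\|A_C\|_1=\sum_{p\in\textrm{supp}(x)} p_C$ has no factor $x_p$. The correct line is simply $\sum_C \|A_C\|_1\, s(C)=\sum_{p\in\textrm{supp}(x)}\sum_C p_C\, s(C)\le |\textrm{supp}(x)|$, which is the bound you actually use. Second, property~(A) as stated in the paper gives only $\|A_C\|_1\ge 64$ for large classes; the stronger fact $s(C)\cdot\|A_C\|_1\ge 64$ that you invoke is what the grouping step~(4) with $\delta=64$ actually yields (via $s(C)\cdot\textrm{mult}(C,x)\ge 64$ and $x\in[0,1[^{\pazocal P}$), and it is indeed what is needed for the volume-counting argument to give $|\textrm{supp}(x)|/64$. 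Neither point affects the validity of the argument.
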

\begin{proof}
On level $0$, we have $|\pazocal{I}_{\sigma,0}| \le \frac{4}{K}\sigma^{17/16} \cdot \tilde{n}_{\sigma}+1$ many intervals
and hence on level $\ell \geq 0$ there are $|\pazocal{I}_{\sigma,\ell}| \le 3^{\ell} \cdot (\frac{4}{K}\sigma^{17/16}\cdot \tilde{n}_{\sigma}+1)$ many.
We can calculate that
\begin{eqnarray*}
 \sum_{I \in \pazocal{I}} e^{-\lambda_I^2/16} 
 &=& 
 \sum_{\sigma \text{ small}} \sum_{\ell \geq 0} e^{-\ell^2/16} \cdot |\pazocal{I}_{\sigma,\ell}| +
  \sum_{\sigma \text{ large}} \sum_{\ell \geq 0} e^{-\ell^2/16} \cdot |\pazocal{I}_{\sigma,\ell}| \\
&\le& 
\sum_{\sigma \text{ small}} \sum_{\ell \geq 0} e^{-\ell^2/16} \cdot 3^{\ell} \cdot (\frac{4}{K}\sigma^{17/16} \cdot \tilde{n}_{\sigma}+1) 
+ \sum_{\sigma \text{ large}} |\pazocal{I}_{\sigma,0}|\\
 &\stackrel{\textrm{for } K,L \textrm{ large enough}}{\leq}& 
 \frac{1}{64}|\text{supp}(x)|+\frac{1}{128 \cdot 24}\sum_{\sigma \text{ small}} \sigma^{17/16} \cdot \tilde{n}_{\sigma} 
 +\frac{1}{64}|\text{supp}(x)|\\
 & \stackrel{\textrm{property }(C)}{\leq}& 
 \frac{1}{64}|\text{supp}(x)|+\frac{1}{128}\sum_{\sigma} \sigma \cdot n_{\sigma} 
  +\frac{1}{64}|\text{supp}(x)|\\
&\leq&
\frac{3}{64}\cdot |\text{supp}(x)| \le
 \frac{1}{16} \cdot |\textrm{supp}(x)|-1.
\end{eqnarray*}
We used that the total size for each pattern is at most $1$, and so the sum of the sizes of all incidences in the matrix $A$ is at most $|\text{supp}(x)|$.
\end{proof}

Now, suppose we do run the Lovett-Meka algorithm and obtain a solution $\tilde{x}$ with $|\text{frac}(\tilde{x})| \leq \frac{1}{2}|\textrm{frac}(x)|$ so that 
\[
  |\left<v_I,x-\tilde{x}\right>| \leq \lambda_I \cdot \|v_I\|_2 \quad \forall I \in \pazocal{I} \quad \textrm{and} \quad \bm{1}^Tx = \bm{1}^T\tilde{x}.
\]
The following is crucial to our error analysis: the lengths $\|v_I\|_2$ that appear in the error bound are not 
too long and in particular the ratio $\frac{\|v_I\|_2}{\tilde{n}(I)}$ decreases with smaller container sizes. 
\begin{lemma}
Fix an interval $I \in \pazocal{I}_{\sigma,\ell}$ where $\sigma$ is small. Then $\|v_I\|_2 \leq \tilde{n}(I) \cdot \sigma^{1/8}$.
\end{lemma}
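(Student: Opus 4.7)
The plan is to bound $\|v_I\|_2$ by a Cauchy--Schwarz argument applied rowwise, feeding in property $(B)$ to control $\|A_i\|_\infty$ and property $(A)$ to control the number of rows in $I$ in terms of $\tilde n(I)$. The first inequality to use is the ``triangle inequality squared'': since $v_I = \sum_{i\in I} A_i$ is a sum of $|I|$ vectors, one gets
\[
  \|v_I\|_2^2 \;\le\; |I|\cdot \sum_{i\in I}\|A_i\|_2^2.
\]
This replaces a potentially costly cross-term contribution in $(v_I)_j^2 = (\sum_i A_{ij})^2$ by an $\ell_2$-bound on each individual row.

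Next I would estimate $\|A_i\|_2^2$ for a single row $i$ in small size class $\sigma$ via $\|A_i\|_2^2 \le \|A_i\|_1\cdot\|A_i\|_\infty$. Property $(B)$ gives $\|A_i\|_\infty \le (1/\sigma)^{1/4}$, and $\|A_i\|_1 = n(i)\le \tilde n(i)$. Summing over $i\in I$ yields
\[
  \sum_{i\in I}\|A_i\|_2^2 \;\le\; \tilde n(I)\cdot \sigma^{-1/4}.
\]
At this point the bound on $\|v_I\|_2^2$ has a factor $|I|$ on the outside, and the task is to trade $|I|$ for $\tilde n(I)$. This is exactly what property $(A)$ provides in the small regime: each row $i$ in size class $\sigma$ has $\tilde n(i)\ge \sigma^{-1/2}$, hence $\tilde n(I)\ge |I|\sigma^{-1/2}$, i.e.\ $|I|\le \tilde n(I)\,\sigma^{1/2}$.

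Putting the three pieces together gives
\[
  \|v_I\|_2^2 \;\le\; |I|\cdot \tilde n(I)\sigma^{-1/4} \;\le\; \tilde n(I)^2\,\sigma^{1/2-1/4} \;=\; \tilde n(I)^2\,\sigma^{1/4},
\]
and taking square roots finishes the claim. The only potentially tricky step is spotting that one should \emph{not} try the naive $\|v_I\|_2\le\sqrt{\|v_I\|_1\|v_I\|_\infty}$ at the level of $v_I$ (which through $\|v_I\|_\infty\le 2/\sigma$ would need an extra lower bound on $\tilde n(I)$ that non-singleton intervals need not satisfy); instead $(B)$ must be applied rowwise so that the $\sigma^{-1/4}$ factor can be paid for by property $(A)$ acting through the number of rows, not through a single row's size.
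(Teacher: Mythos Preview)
Your proof is correct, but it takes a slightly more circuitous route than the paper's. The paper avoids the Cauchy--Schwarz step $\|v_I\|_2^2 \le |I|\sum_{i\in I}\|A_i\|_2^2$ entirely: instead it applies both $(A)$ and $(B)$ \emph{rowwise} to get
\[
\|A_{C_i}\|_2 \;\le\; \sqrt{\|A_{C_i}\|_1\cdot\|A_{C_i}\|_\infty}
\;\le\; \tilde n(i)\sqrt{\tfrac{\|A_{C_i}\|_\infty}{\tilde n(i)}}
\;\le\; \tilde n(i)\sqrt{\tfrac{\sigma^{-1/4}}{\sigma^{-1/2}}}
\;=\; \tilde n(i)\,\sigma^{1/8},
\]
and then simply sums via the triangle inequality, $\|v_I\|_2\le\sum_{i\in I}\|A_{C_i}\|_2 \le \tilde n(I)\sigma^{1/8}$. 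Your approach instead defers the use of $(A)$: you bound each $\|A_i\|_2^2$ using only $(B)$, and then invoke $(A)$ in aggregate to trade $|I|$ for $\tilde n(I)\sigma^{1/2}$. Both arguments use the same ingredients and give the same constant; the paper's version is a touch leaner because it never needs to keep track of $|I|$ as a separate quantity.
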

\begin{proof}
Recall that $v_I = \sum_{i \in I} A_{C_i}$ where each row $A_{C_i}$ has a row-sum of $\|A_{C_i}\|_1 \leq \|\tilde{A}_{C_i}\|_1$. We have $\tilde{n}(i)= \|\tilde{A}_{C_i}\|_1\geq (\frac{1}{\sigma})^{1/2}$, while $\|A_{C_i}\|_{\infty} \leq (\frac{1}{\sigma})^{1/4}$. Therefore, we have
\[
  \|A_{C_i}\|_2 \leq \sqrt{\|A_{C_i}\|_1 \cdot \|A_{C_i}\|_{\infty}} \leq \sqrt{\|\tilde{A}_{C_i}\|_1 \cdot \|A_{C_i}\|_{\infty}} = \|\tilde{A}_{C_i}\|_1\sqrt{\frac{\|A_{C_i}\|_{\infty}}{\|\tilde{A}_{C_i}\|_1}} \leq  \tilde{n}(i) \cdot \sigma^{1/8}.
\]
Then by the triangle inequality $\|v_I\|_2 \leq \sum_{i \in I} \|A_{C_i}\|_2 \leq \tilde{n}(I) \cdot \sigma^{1/8}$.
\end{proof}
The next step should be to argue that the error in terms of the deficiency will be small. Recall that we still assume
that containers are sorted so that $1 \geq s(C_1) \geq s(C_2) \geq \ldots \geq s(C_s)>0$.
\begin{lemma}\label{lem:LMRounding}
Let $C_i$ be a container in small size class $\sigma$. Then
\[
   \Big|\sum_{j \leq i} A_{C_j}(x-\tilde{x}) \Big| \leq O\left(\frac{1}{\sigma}\right)^{15/16}.
\]
If $C_i$ is a large container, then $\sum_{j\leq i}A_{C_j}(x-\tilde{x})=0$.
\end{lemma}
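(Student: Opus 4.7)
The plan is to express the prefix sum $\sum_{j\le i} A_{C_j}(x-\tilde{x})$ as a combination of inner products $\langle v_I, x-\tilde{x}\rangle$ over carefully chosen intervals $I\in\pazocal{I}$ and then invoke the Lovett--Meka guarantee $|\langle v_I, x-\tilde{x}\rangle|\le \lambda_I\|v_I\|_2$ on each. The key fact is that $\lambda_I=0$ for every level-$0$ interval, so those contribute exactly zero; only intervals of level $\ell\ge 1$ carry any error.

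First I split the rows $\{1,\ldots,i\}$ by size class. For any size class $\sigma'$ strictly larger than the size class $\sigma$ of $C_i$, every row in $\sigma'$ lies in the prefix, and by construction the level-$0$ intervals $\pazocal{I}_{\sigma',0}$ partition those rows exactly (one row each if $\sigma'$ is large, arbitrary partition into level-$0$ intervals if $\sigma'$ is small). Hence the total contribution of size class $\sigma'$ is zero. If $C_i$ itself lives in a large size class, then size class $\sigma$ is large as well and its rows are each their own level-$0$ interval with $\lambda_I=0$; combined with the above this proves the second statement of the lemma.

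When $C_i$ is in a small size class $\sigma$, only the partial prefix of rows in size class $\sigma$ up to index $i$ remains to be handled. I decompose this prefix using the ternary subdivision tree of $\pazocal{I}_\sigma$: first include all complete level-$0$ intervals sitting entirely inside the prefix (they contribute zero); then descend into the one level-$0$ interval containing row $i$, and at each subsequent level pick up the at most two leftmost complete children and recurse into the rightmost partial child. This produces a decomposition using at most two intervals per level $\ell\ge 1$.

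To finish I combine the Lovett--Meka bound with the size estimates. For an interval $I\in \pazocal{I}_{\sigma,\ell}$ with $\ell\ge 1$, the inductive construction gives $\tilde{n}(I)\le 2(\tfrac{1}{2})^\ell K\sigma^{-17/16}$, and the preceding lemma gives $\|v_I\|_2\le \tilde{n}(I)\sigma^{1/8}$, so
\[
|\langle v_I, x-\tilde{x}\rangle|\le \ell\,\|v_I\|_2\le 2K\ell(\tfrac{1}{2})^\ell\sigma^{-15/16}.
\]
Summing at most two such intervals per level and using the convergent series $\sum_{\ell\ge 1}\ell(\tfrac{1}{2})^\ell=O(1)$ yields a total error of $O(\sigma^{-15/16})$ by the triangle inequality, as claimed. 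The only step that needs any care is verifying that the ternary prefix decomposition contributes $O(1)$ intervals per level, but this is a routine combinatorial fact about ordered trees of bounded branching and is not expected to be the main obstacle.
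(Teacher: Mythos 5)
Your proposal is correct and takes essentially the same approach as the paper's proof: decompose the prefix $\{1,\ldots,i\}$ into intervals from $\pazocal{I}$, note that level-$0$ intervals contribute nothing because $\lambda_I=0$, and bound the class-$\sigma$ contribution by summing $\lambda_I\|v_I\|_2$ over $O(1)$ intervals per level, using the geometric decay of $\tilde{n}(I)$ with level. (The paper allows ``at most three'' class-$\sigma$ intervals per level rather than your ``at most two,'' since the deepest child containing $i$ may itself lie entirely in the prefix and be picked up alongside two complete siblings, but this only changes a constant.)
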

\begin{proof}
If $C_i$ is a container in small size class $\sigma$, we can write the interval $\{ 1,\ldots,i\} = \dot{\bigcup}_{I \in \pazocal{I}(i)} I$ as the disjoint union of intervals $\pazocal{I}(i) \subseteq \pazocal{I}$ from our collection 
so that the only intervals $I \in \pazocal{I}(i)$ with $\lambda_I > 0$ that we are using are from class $\sigma$ and we only take at most three
intervals from each level; for all three such intervals on level $\ell$, we have $\|v_I\|_2 \le \tilde{n}(I)\sigma^{1/8}\le K\cdot2^{-\ell}\left(\frac{1}{\sigma}\right)^{15/16}$.
Consequently, we can bound
\begin{eqnarray*}
  \Big|\sum_{j \leq i} A_{C_j}(\tilde{x}-x)\Big| &\leq& \sum_{I \in \pazocal{I}(i)} \lambda_I \cdot \|v_I\|_2 \leq  \sum_{\ell \geq 0} 3\ell \cdot K\cdot2^{-\ell}\left(\frac{1}{\sigma}\right)^{15/16}\\
  &= & O(1) \cdot \left(\frac{1}{\sigma}\right)^{15/16}.
\end{eqnarray*}
If $C_i$ is a large container, we can write $\{1,...,i\}$ as a disjoint union of intervals with $\lambda=0$, and so the statement holds.
\end{proof}

It remains to argue why $\textrm{def}(\tilde{x},y) \leq \textrm{def}(x,y) + O(1)$ for one application of Lovett-Meka. 
First notice that $A_{C_j}\tilde{x}=\text{mult}(C_j,\tilde{x})$ and $A_{C_j}x=\text{mult}(C_j,x)$. Therefore by Lemmas \ref{lem:DefIncrease} and \ref{lem:LMRounding} the rounding of each size class $\sigma$ increases the deficiency by at most $O(1)\cdot(\frac{1}{\sigma})^{15/16}\cdot \sigma=O(1)\cdot \sigma^{1/16}$. Summing over all size classes gives a total increase in deficiency 
\[
O(1)\cdot \sum_{\sigma \in 2^{-\setN}} \sigma^{1/16} \leq O(1).
\]


\bibliographystyle{alpha}
\bibliography{binpacking-log-apx}

\newcommand{\etalchar}[1]{$^{#1}$}
\begin{thebibliography}{FdlVL81}

\bibitem[AHK12]{MWU-Survey-Arora-HazanKale2012}
S.~Arora, E.~Hazan, and S.~Kale.
\newblock The multiplicative weights update method: a meta-algorithm and
  applications.
\newblock {\em Theory of Computing}, 8(6):121--164, 2012.

\bibitem[CGJ84]{BinPackingSurvey84}
E.~G. Coffman, Jr., M.~R. Garey, and D.~S. Johnson.
\newblock Approximation algorithms for bin-packing---an updated survey.
\newblock In {\em Algorithm design for computer system design}, volume 284 of
  {\em CISM Courses and Lectures}, pages 49--106. Springer, Vienna, 1984.

\bibitem[Dós07]{FFDtightBound-Dosa07}
G.~Dósa.
\newblock The tight bound of first fit decreasing bin-packing algorithm is
  {FFD}({I}) $<$= 11/9{OPT}({I}) + 6/9.
\newblock In B.~Chen, M.~Paterson, and G.~Zhang, editors, {\em ESCAPE}, volume
  4614 of {\em Lecture Notes in Computer Science}, pages 1--11. Springer, 2007.

\bibitem[Eis57]{TrimProblem-Eiseman1957}
K.~Eisemann.
\newblock The trim problem.
\newblock {\em Management Science}, 3(3):279--284, 1957.

\bibitem[EPR11]{BinPackingViaPermutationsSODA2011}
F.~Eisenbrand, D.~Pálvölgyi, and T.~Rothvoß.
\newblock Bin packing via discrepancy of permutations.
\newblock In {\em SODA}, pages 476--481, 2011.

\bibitem[FdlVL81]{deLaVegaLueker81}
W.~Fernandez de~la Vega and G.~S. Lueker.
\newblock Bin packing can be solved within {$1+\varepsilon $} in linear time.
\newblock {\em Combinatorica}, 1(4):349--355, 1981.

\bibitem[GG61]{Gilmore-Gomory61}
P.~C. Gilmore and R.~E. Gomory.
\newblock A linear programming approach to the cutting-stock problem.
\newblock {\em Operations Research}, 9:849--859, 1961.

\bibitem[GJ79]{GareyJohnson79}
M.~R. Garey and D.~S. Johnson.
\newblock {\em Computers and Intractability: {A} Guide to the Theory of
  {NP}-Completeness}.
\newblock W. H. Freeman and Company, New York, New York, 1979.

\bibitem[GLS81]{GLS-algorithm-Journal81}
M.~Grötschel, L.~Lovász, and A.~Schrijver.
\newblock The ellipsoid method and its consequences in combinatorial
  optimization.
\newblock {\em Combinatorica}, 1(2):169--197, 1981.

\bibitem[JDU{\etalchar{+}}74]{JohnsonFFD74}
D.~S. Johnson, A.~Demers, J.~D. Ullman, M.~R. Garey, and R.~L. Graham.
\newblock Worst-case performance bounds for simple one-dimensional packing
  algorithms.
\newblock {\em SIAM Journal on Computing}, 3(4):299--325, 1974.

\bibitem[Joh73]{Johnson73}
D.~S. Johnson.
\newblock {\em Near-optimal bin packing algorithms}.
\newblock PhD thesis, MIT, Cambridge, MA, 1973.

\bibitem[KK82]{KarmarkarKarp82}
N.~Karmarkar and R.~M. Karp.
\newblock An efficient approximation scheme for the one-dimensional bin-packing
  problem.
\newblock In {\em 23rd annual symposium on foundations of computer science
  (Chicago, Ill., 1982)}, pages 312--320. IEEE, New York, 1982.

\bibitem[LM12]{DiscrepancyMinimization-LovettMekaFOCS12}
S.~Lovett and R.~Meka.
\newblock Constructive discrepancy minimization by walking on the edges.
\newblock In {\em FOCS}, pages 61--67, 2012.

\bibitem[NNN12]{CounterexampleBecksPermutationConjecture-FOCS12}
A.~Newman, O.~Neiman, and A.~Nikolov.
\newblock Beck's three permutations conjecture: A counterexample and some
  consequences.
\newblock In {\em FOCS}, pages 253--262, 2012.

\bibitem[PST95]{FractionalPackingAndCovering-PlotkinShmoysTardos-Journal95}
S.~A. Plotkin, D.~B. Shmoys, and É.~Tardos.
\newblock Fast approximation algorithms for fractional packing and covering
  problems.
\newblock {\em Math. Oper. Res.}, 20(2):257--301, 1995.

\bibitem[Rot13]{DBLP:conf/focs/Rothvoss13}
T.~Rothvo{\ss}.
\newblock Approximating bin packing within {O}(log {OPT} * log log {OPT)} bins.
\newblock In {\em 54th Annual {IEEE} Symposium on Foundations of Computer
  Science, {FOCS} 2013, 26-29 October, 2013, Berkeley, CA, {USA}}, pages
  20--29, 2013.

\bibitem[ST97]{BinPacking-MIRUP-ScheithauerTerno97}
G.~Scheithauer and J.~Terno.
\newblock Theoretical investigations on the modified integer round-up property
  for the one-dimensional cutting stock problem.
\newblock {\em Operations Research Letters}, 20(2):93 -- 100, 1997.

\bibitem[WS11]{DesignOfApproxAlgosWilliamsonShmoys}
D.~P. Williamson and D.~B. Shmoys.
\newblock {\em The Design of Approximation Algorithms}.
\newblock Cambridge University Press, 2011.

\end{thebibliography}

\appendix

\section*{Appendix A}

Here we give the proof of Lemma~\ref{lem:AssumptionSizesAtLeast1-n}. 
Recall that for our result we would need $f(k) = \Theta(\log(k))$.
\begin{proof}
Let $(s,b)$   be any bin packing instance with $s \in [0,1]^n$ and $b \in \setN^n$. 
Let $U := \sum_{i=1}^n b_is_i$ be the total size. Note that $U \leq OPT_f \leq 2U+1$, 
so $U$ is a good estimate on the value of the LP optimum. 
We split items into \emph{large} ones $L := \{ i \in [n] \mid s_i \geq \frac{1}{U}\}$
and \emph{small} ones $S := \{ i \in [n] \mid s_i < \frac{1}{U}\}$.

Now, we perform the \emph{geometric grouping} from~\cite{KarmarkarKarp82}
to the large items as follows: sort items consecutively and form groups of
total size between $2$ and $3$. Then for each group, round all items to the largest item 
type in its group. 
This procedure allows to reduce the number of different item 
types to $U$ while the optimal fractional value increases to at most $OPT_f' \leq OPT_f + O(\log U)$.
Now we run the assumed algorithm to assign items in $L$ to at most $OPT_f' + f(U) \leq OPT_f + f(OPT_f) + O(\log U)$
bins. Here we are using that $OPT_f \geq U$ is an upper bound on the number of items in the
modified instance and $s_{\min} := \frac{1}{U}$ is a lower bound on the item sizes in $L$. 

Then we ``sprinkle'' the small items greedily over those bins. 
If no new bin needs to be opened, we are done. 
Otherwise, we know that the solution consists of $k$ bins such that  $k-1$ bins are at least $1 - \frac{1}{U}$ full. This implies 
$U \geq (k-1) \cdot (1 - \frac{1}{U})$, and hence $k \leq U + 3 \leq OPT_f + 3$ assuming $U \geq 2$.
\end{proof}

\end{document}